\documentclass[12pt]{article}

\usepackage[latin1]{inputenc}
\usepackage{amssymb}
\usepackage{amsmath}
\usepackage{amsthm}
\usepackage{latexsym}
\usepackage{subfigure}
\usepackage{graphicx}
\usepackage{bm}
\usepackage{overpic}
\usepackage[normalem]{ulem}
\usepackage{url}
\usepackage{soul}
\usepackage{exscale}
\usepackage{amsfonts}
\usepackage[usenames,dvipsnames]{color} 
\usepackage{hyperref}
\usepackage{verbatim}
\usepackage[usenames,dvipsnames]{xcolor}

\textwidth=6.3in \textheight=8.8in \hoffset=-0.4in
\voffset=-0.85in
\parskip=6pt
 \baselineskip=9pt
 \topmargin 0.8in


\newtheorem{definition}{Definition}[section]
\newtheorem{lemma}{Lemma}[section]
\newtheorem{remark}{Remark}[section]
\newtheorem{example}{Example}[section]
\newtheorem{theorem}{Theorem}[section]
\newtheorem{cor}{Corollary}[section]

\numberwithin{equation}{section}

\newcommand{\Ex}{\mathrm{e}}
\newcommand{\E}{\mathbb{E}}
\newcommand{\R}{\mathbb{R}}
\newcommand{\Dd}{\mathcal{D}}

\newcommand{\BS}{\rm BS}
\newcommand{\p}{\partial}

\newcommand{\beas}{\begin{eqnarray*}}
\newcommand{\eeas}{\end{eqnarray*}}
\newcommand{\bea}{\begin{eqnarray}}
\newcommand{\eea}{\end{eqnarray}}
\newcommand{\ben}{\begin{enumerate}}
\newcommand{\een}{\end{enumerate}}

\newcommand{\bi}{\begin{itemize}}
\newcommand{\ei}{\end{itemize}}
\newcommand{\beq}{\begin{equation}}
\newcommand{\eeq}{\end{equation}}

\newcommand{\sgn}{\mathrm{sign}}
\definecolor{shade}{gray}{.99}

\definecolor{shade}{gray}{.99}

\begin{document}

\title{\bf Arbitrage-free SVI volatility surfaces}

\author{Jim Gatheral\footnote{Department of Mathematics, Baruch College, CUNY. {\tt  jim.gatheral@baruch.cuny.edu}}{\setcounter{footnote}{1}} , Antoine Jacquier\footnote{Department of Mathematics, Imperial College, London. {\tt  ajacquie@imperial.ac.uk}}{\setcounter{footnote}{2}} }


\maketitle\thispagestyle{empty}


\begin{abstract}

In this article, we show how to calibrate the widely-used SVI parameterization of the implied volatility smile in such a way as to guarantee the absence of static arbitrage.  
In particular, we exhibit a large class of arbitrage-free SVI volatility surfaces with a simple closed-form representation.  
We demonstrate the high quality of typical SVI fits with a numerical example using recent SPX options data.
\end{abstract}

%
%
%
%

\section{Introduction}

The {\em stochastic volatility inspired} or {\em SVI} parameterization of the implied volatility smile was originally devised at Merrill Lynch in 1999 and subsequently publicly disseminated in \cite{Gatheral:2004}.  This parameterization has two key properties that have led to its popularity with practitioners:
\begin{itemize}
\item{For a fixed time to expiry $t$, the implied Black-Scholes variance~$\sigma_{\BS}^2(k,t)$ is linear in the log-strike $k$ as $|k| \to \infty$ consistent with Roger Lee's moment formula~\cite{RogerLee}.}
\item{It is relatively easy to fit listed option prices whilst ensuring no calendar spread arbitrage.}
\end{itemize}

The consistency of the SVI parameterization with arbitrage bounds for extreme strikes has also led to its use as an extrapolation formula~\cite{HypHyp}.
Moreover, as shown in~\cite{GatheralJacquier}, the SVI parameterization is not arbitrary in the sense that the large-maturity limit of the Heston implied volatility smile is exactly SVI.  
However it is well-known that SVI smiles may be arbitrageable.
Previous work has shown how to calibrate SVI to given implied volatility data (for example \cite{Zeliade}).  
Other recent work \cite{CarrWu} has been concerned with showing how to parameterize the volatility surface in such a way as to preclude dynamic arbitrage.  
There has been some work on arbitrage-free interpolation of implied volatilities or equivalently of option prices \cite{AndreasenHuge}, \cite{Fengler}, \cite{GlaserHeider}, \cite{Kahale}.  
Prior work has not successfully attempted to eliminate static arbitrage and indeed, efforts to find simple closed-form arbitrage-free parameterizations of the implied volatility surface are still widely considered to be futile.

In this article, we exhibit a large class of SVI volatility surfaces with a simple closed-form representation, for which absence of static arbitrage is guaranteed.
Absence of static arbitrage---as defined by Cox and Hobson~\cite{CoxHobson}---corresponds 
to the existence of a non-negative martingale on a filtered probability space such that European call option prices can be written as the expectation, under the risk-neutral measure, of their final payoffs.
This definition also implies (see~\cite{Fengler}) that the corresponding total variance must be an increasing function of the maturity (absence of calendar spread arbitrage).
Using some mathematics from the Renaissance, we show how to eliminate any calendar spread arbitrage resulting from a given set of SVI parameters.
We also present a set of necessary conditions for the corresponding density to be non-negative (absence of butterfly arbitrage), 
which corresponds---from the definition of static arbitrage---to call prices being decreasing and convex functions of the strike.
We go on to use the existence of such arbitrage-free surfaces to devise a new algorithm for eliminating butterfly arbitrage should it occur.  
With both types of arbitrage eliminated, we achieve a volatility surface that typically calibrates well to given implied volatility data and is guaranteed free of static arbitrage.

In Section~\ref{sec:calenderSpreadArbitrage}, we present a necessary and sufficient condition for the absence of calendar spread arbitrage.
In Section~\ref{sec:butterflySpreadArbitrage}, we present a necessary and sufficient condition for the absence of butterfly arbitrage, or negative densities.
In Section~\ref{sec:parameterizations}, we present various equivalent forms of the SVI parameterization.
In Section~\ref{sec:ssvi}, we exhibit a large and useful class of SVI volatility surfaces that are guaranteed to be free of static arbitrage.
In Section~\ref{sec:calibration}, we show how to calibrate SVI to observed option prices, avoiding both butterfly and calendar spread arbitrages.  We further show how to interpolate and extrapolate in such a way as to guarantee the absence of static arbitrage.
Finally, in Section \ref{sec:conclusion}, we summarize and conclude.

\textbf{Notations.}
In the foregoing, we consider a stock price process $\left(S_t\right)_{t\geq 0}$
with natural filtration $\left(\mathcal{F}_t\right)_{t\geq 0}$, and we define the forward price process $\left(F_t\right)_{t\geq 0}$
by $F_t:=\mathbb{E}\left(S_t|\mathcal{F}_0\right)$.
For any $k\in\mathbb{R}$ and $t>0$,
$C_{\BS}(k,\sigma^2 t)$ denotes the Black-Scholes price of a European Call option on $S$ with strike $F_t \Ex^{k}$,
maturity $t$ and volatility $\sigma>0$.
We shall denote the Black-Scholes implied volatility by $\sigma_{\BS}(k,t)$, and define the total implied variance by
$$
w(k,t)=\sigma_{\BS}^2(k,t)t.
$$
The implied variance $v$ shall be equivalently defined as $v(k,t)=\sigma_{\BS}^2(k,t)=w(k,t)/t$.
We shall refer to the two-dimensional map $(k,t)\mapsto w(k,t)$ as the volatility surface,
and for any fixed maturity $t>0$, the function $k\mapsto w(k,t)$ will represent a slice.
We propose below three different---yet equivalent---slice parameterizations of the total implied variance,
and give the exact correspondence between them.
For a given maturity slice, we shall use the notation $w(k;\chi)$ where $\chi$ represents a set of parameters, and drop the $t$-dependence.

\section{Characterisation of static arbitrage}\label{sec:arb}
In this section we provide model-independent definitions of (static) arbitrage and some preliminary results.
We define static arbitrage for a given volatility surface in the following way, 
which is equivalent to the definition of static arbitrage for call options 
recalled in the introduction (see also~\cite{Roper}).
\begin{definition}\label{def:NoArbVol}
A volatility surface is free of static arbitrage if and only if the following conditions are satisfied:
\begin{itemize}
\item[(i)] it is free of calendar spread arbitrage;
\item[(ii)] each time slice is free of butterfly arbitrage.
\end{itemize}
\end{definition}
In particular, absence of butterfly arbitrage ensures the existence of a (non-negative) probability density, and absence of calendar spread arbitrage implies monotonicity of option prices with respect to the maturity.
The following two subsections analyse in details each of these two types of arbitrage, 
in a model-independent way.

\subsection{Calendar spread arbitrage}\label{sec:calenderSpreadArbitrage}

Calendar spread arbitrage is usually expressed as the monotonicity of European call option prices with respect to the maturity (see for example~\cite{CarrMadan} or~\cite{Cousot}).  Since our main focus here is on the implied volatility, we translate this definition into a property of the implied volatility.
Indeed, assuming proportional dividends, we establish a necessary and sufficient condition for an implied volatility parameterization to be free of calendar spread arbitrage.
This can also be found in~\cite{Fengler} and~\cite{Gatheral:2004} and we outline its proof for completeness.

\begin{lemma}\label{lem:noCalendarArb}
If dividends are proportional to the stock price, the volatility surface~$w$ is free of calendar spread arbitrage if and only if
$$
\partial _t w(k,t) \geq 0,
\quad\text{for all }k\in\mathbb{R}\text{ and }t>0.
$$
\end{lemma}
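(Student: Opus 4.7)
The plan is to reduce the equivalence to strict monotonicity of the Black--Scholes price as a function of total variance. There are three ingredients: (i) reformulating calendar-spread arbitrage at fixed log-moneyness using the proportional-dividend hypothesis, (ii) expressing the corresponding call price as $C_{\BS}(k,w(k,t))$, and (iii) a chain-rule computation that exploits positivity of vega.

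First I would observe that, under proportional dividends, the ratio $X_t:=S_t/F_t$ is a non-negative $(\mathcal{F}_t)$-martingale with $X_0=1$, and the undiscounted value of a European call with strike $F_t\Ex^k$ and maturity $t$ equals $F_t\,\E[(X_t-\Ex^k)^+]$. Absence of calendar-spread arbitrage is therefore equivalent to
$$
c(k,t) := \E\!\left[(X_t-\Ex^k)^+\right]
$$
being non-decreasing in $t$ for every $k\in\R$: the forward direction follows from the conditional Jensen inequality applied to the convex payoff $x\mapsto(x-\Ex^k)^+$ and the martingale $X$, while the reverse is simply the definition of calendar-spread arbitrage in this normalisation.

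Next, by the very definition of the implied total variance, $c(k,t) = C_{\BS}(k,w(k,t))$, where $C_{\BS}(k,w) = N(d_+) - \Ex^k N(d_-)$ with $d_\pm(k,w) = -k/\sqrt{w}\pm\sqrt{w}/2$. A direct differentiation gives
$$
\frac{\partial C_{\BS}}{\partial w}(k,w) = \frac{1}{2\sqrt{2\pi w}}\exp\!\left(-\tfrac{1}{2}d_+(k,w)^2\right) > 0
$$
for every $k\in\R$ and $w>0$, so $C_{\BS}(k,\cdot)$ is strictly increasing on $(0,\infty)$. The chain rule then yields $\partial_t c(k,t) = \partial_w C_{\BS}(k,w(k,t))\,\partial_t w(k,t)$, and since the first factor is strictly positive, monotonicity of $c(k,\cdot)$ is equivalent to $\partial_t w(k,t)\ge 0$, which together with the first step completes the proof.

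The main obstacle is the first step: one must check carefully that the proportional-dividend hypothesis lets the no-arbitrage requirement be phrased at a fixed value of log-moneyness $k$ rather than at a fixed nominal strike $K$, since discount factors and dividend yields must combine so that the forward-normalised call $c(k,t)$ is exactly the quantity required to be monotone in $t$. Once the martingale $X_t$ is in hand the Jensen step is routine, and the remaining analytic content reduces to the elementary positivity of the Black--Scholes vega.
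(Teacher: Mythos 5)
Your proof is correct and follows essentially the same route as the paper's: both use the proportional-dividend assumption to turn $X_t = S_t/F_t$ into a martingale, invoke the monotonicity of $\E[(X_t-L)^+]$ in $t$ to conclude that forward-normalised call prices at fixed log-moneyness are non-decreasing, and then use strict monotonicity of $C_{\BS}(k,\cdot)$ in total variance to transfer this to $w(k,\cdot)$. Your version is marginally more explicit in that you compute the positive vega $\partial_w C_{\BS}$ and invoke the chain rule, whereas the paper simply states that $C_{\BS}$ is strictly increasing in its second argument and concludes directly that $w(k,\cdot)$ must be non-decreasing; this is a cosmetic difference, not a different argument.
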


\begin{proof}
Let $\left(X_t\right)_{t\geq 0}$ be a martingale, $L\geq 0$ and $0\leq t_1<t_2$.
Then the inequality
$$
\E\left[(X_{t_2}-L)^+\right]\geq \E\left[(X_{t_1}-L)^+\right]
$$
is standard.
For any $i=1,2$, let $C_i$ be options with strikes $K_i$ and expirations $t_i$.
Suppose that the two options have the same moneyness, i.e.
$$
\frac{K_1}{F_{t_1}}=\frac{K_2}{F_{t_2}}=:\Ex^k
$$
Then, if dividends are proportional, {the process $(X_t)_{t\geq 0}$ defined by }
$X_t := S_t/F_{t}$ for all $t\geq 0$ is a martingale and
$$
\frac{C_2}{K_2}=\Ex^{-k}\E\left[\left(X_{t_2}-\Ex^{k}\right)^+\right] \geq
\Ex^{-k}\E\left[\left(X_{t_1}-\Ex^{k}\right)^+\right]=\frac{C_1}{K_1}
$$
So, if dividends are proportional, keeping the moneyness constant, option prices are non-decreasing in time to expiration.
The Black-Scholes formula for the non-discounted value of an option may be expressed in the form
{$C_{\BS}(k, w(k,t))$  with $C_{\BS}$ strictly increasing in its second argument.
It follows that for fixed $k$, the function $w(k,\cdot)$ must be non-decreasing.}
\end{proof}

\noindent Lemma \ref{lem:noCalendarArb} motivates the following definition.

\begin{definition}\label{def:CalendarArb}
{A volatility surface $w$ is free of calendar spread arbitrage if
$$
\partial _t w(k,t) \geq 0,
\quad\text{for all }k\in\mathbb{R}\text{ and }t>0.
$$
}
\end{definition}

\subsection{Butterfly arbitrage}\label{sec:butterflySpreadArbitrage}

In Section~\ref{sec:calenderSpreadArbitrage}, we provided conditions under which a volatility surface could be guaranteed to be free of calendar spread arbitrage.
We now consider a different type of arbitrage, namely butterfly arbitrage (Definition~\ref{def:ButterflyArb}).  Absence of this arbitrage corresponds to the existence of a risk-neutral martingale measure and the classical definition of no static arbitrage, 
as developed in~\cite{Follmer} or~\cite{CoxHobson}.
In this section, we consider only one slice of the implied volatility surface, i.e. the map $k\mapsto w(k,t)$
for a given fixed maturity $t>0$.
For clarity we therefore drop---in this section only---the $t$-dependence of the smile and use the notation $w(k)$ instead.
Unless otherwise stated, we shall always assume that the map $k\mapsto w(k,t)$ is at least of class $\mathcal{C}^2(\mathbb{R})$ for all $t\geq 0$.

\begin{definition}\label{def:ButterflyArb}
A slice is said to be free of butterfly arbitrage if the corresponding density is non-negative.
\end{definition}
Recall the Black-Scholes formula for a European call option price:
$$
C_{\BS}(k,w(k))=S\left(\mathcal{N}(d_+(k))-\Ex^{k}\mathcal{N}(d_-(k))\right),
\quad\text{for all }k\in\mathbb{R},
$$
where $\mathcal{N}$ is the Gaussian cdf and 
$d_{\pm}(k):=-k/\sqrt{w(k)}\pm\sqrt{w(k)}/2$.
Let us define the function $g:\mathbb{R}\to\mathbb{R}$ by
\begin{equation}\label{eq:gBut}
g(k):=\left(1-\frac{k w'(k)}{2 w(k)}\right)^2-\frac{w'(k)^2}{4}\left(\frac{1}{w(k)}+\frac{1}{4} \right)
+\frac{w''(k)}{2}.
\end{equation}
This function will be the main ingredient in the determination of butterfly arbitrage as stated in the following lemma.
\begin{lemma}\label{lem:NoButterflyArb}
A slice is free of butterfly arbitrage if and only if $g(k)\geq 0$ for all $k\in\mathbb{R}$
and $\lim\limits_{k\to+\infty}d_+(k)=-\infty$.
\end{lemma}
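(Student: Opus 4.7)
The plan is to compute the risk-neutral density of the log-forward moneyness $X=\log(S_t/F_t)$ implied by the call prices $k\mapsto C_{\BS}(k,w(k))$, and read off both parts of the lemma from its explicit form. By a Breeden-Litzenberger argument applied in log-moneyness, this density $p$ satisfies
\[
F_t\,\Ex^k\,p(k) \;=\; (\partial_k^2 - \partial_k)\,C_{\BS}(k,w(k)),
\]
so the first step is to apply the operator $\partial_k^2-\partial_k$ to $\tilde{C}(k):=\mathcal{N}(d_+(k))-\Ex^k\mathcal{N}(d_-(k))$ and simplify. The target factorisation is
\[
p(k) \;=\; \frac{g(k)}{\sqrt{2\pi\, w(k)}}\exp\!\Bigl(-\tfrac{d_-(k)^2}{2}\Bigr),
\]
from which non-negativity of $p$ is immediately equivalent to $g(k)\geq 0$ for every $k\in\mathbb{R}$.

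To carry out the differentiation cleanly I would exploit the two identities $d_+(k)-d_-(k)=\sqrt{w(k)}$ and $\Ex^k\mathcal{N}'(d_-(k))=\mathcal{N}'(d_+(k))$ (both straightforward from the definition of $d_\pm$) to kill the bulk of the cross terms. One efficient route is the decomposition $\tfrac{d}{dk}C_{\BS}(k,w(k)) = \partial_k C_{\BS} + w'(k)\,\partial_w C_{\BS}$ together with the Black-Scholes vega relation $\partial_w C_{\BS}=S\,\mathcal{N}'(d_+)/(2\sqrt{w})$; iterating this once more and collecting terms should make the three pieces of $g$ appear naturally: the square $(1-kw'/(2w))^2$ from $(\partial_k d_+)^2$, the $w''/2$ term from $\partial_k^2 d_+$, and the $-w'^2(1/w+1/4)/4$ correction from the mixed and second-order vega contributions.

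With the factorisation in hand, $p\geq 0$ is equivalent to $g\geq 0$. To promote $p$ into a bona fide probability density (Definition~\ref{def:ButterflyArb}), we still require $\int_\mathbb{R} p=1$. Using $-\partial_k\tilde{C}(k)=\Ex^k\,\mathbb{P}(X>k)$, total mass reduces to the two boundary limits $\tilde{C}(-\infty)=1$ and $\tilde{C}(+\infty)=0$. The first limit is automatic under the $\mathcal{C}^2$ smoothness assumption and the no-arbitrage Lee-type growth bounds on a well-defined smile, since then $d_+(k)\to+\infty$ and $\Ex^k\mathcal{N}(d_-(k))\to 0$ as $k\to-\infty$. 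The second limit is exactly $\mathcal{N}(d_+(k))\to 0$ as $k\to+\infty$, i.e.\ $\lim_{k\to+\infty}d_+(k)=-\infty$. This places the second hypothesis of the lemma precisely where it is needed, as the tail condition for conservation of mass.

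The hard part will be the algebraic reduction of $(\partial_k^2-\partial_k)\tilde{C}$ to the single term $\mathcal{N}'(d_-)\,g(k)/\sqrt{w(k)}$: a good number of $w'$, $w''$, $d_+$ and $d_-$ terms must collapse via the two identities above, and sign or chain-rule slips are easy. A secondary subtlety is the \emph{necessity} of the $d_+$-limit: if $\liminf_{k\to+\infty}d_+(k)>-\infty$, I would argue that $\tilde{C}(+\infty)>0$ forces a strictly positive amount of mass to escape to $+\infty$, so that $p$ fails to be a probability density and butterfly arbitrage is present.
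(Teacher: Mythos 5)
Your proposal follows essentially the same route as the paper: apply Breeden--Litzenberger, carry out the explicit differentiation of $C_{\BS}(k,w(k))$ to obtain the factorisation $p(k)=g(k)\,\exp(-d_-(k)^2/2)/\sqrt{2\pi w(k)}$, read off $p\geq 0\Leftrightarrow g\geq 0$, and then impose that $p$ actually integrates to one via the boundary condition $\lim_{k\to+\infty}d_+(k)=-\infty$ (the paper delegates the equivalence of this with vanishing call prices to Rogers--Tehranchi). Working in log-moneyness with the operator $\partial_k^2-\partial_k$ rather than in strike with $\partial_K^2$ is only a change of variables, and your brief sanity check of the $k\to-\infty$ boundary is a welcome detail the paper leaves implicit, but the substance is identical.
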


\begin{proof}
It is well known~\cite{Breeden} that the probability density function $p$ may be computed from the call price function $C$ as
$$
p(k) = \left.\frac{\partial^2 C(k)}{\partial K^2}\right|_{K=F_t\Ex^k}
 = \left.\frac{\partial^2 C_{\BS}(k,w(k))}{\partial K^2}\right|_{K=F_t\Ex^k},
\quad\text{for any }k\in\mathbb{R}.
$$
Explicit differentiation of the Black-Scholes formula then gives for any $k\in\mathbb{R}$,
$$
p(k)=\frac{g(k)} {\sqrt{2 \pi w(k)}}\exp\left(-\frac{d_-(k)^2}{2}\right).
$$
We have so far implicitly assumed that the function $p$ is a well-defined density, 
and in particular that it integrates to one.
This may not always be the case though, and one needs to impose asymptotic boundary conditions.
In particular, call prices must converge to 0 as $k$ tends to infinity, 
which is equivalent to having $\lim_{k\to+\infty}d_+(k)=-\infty$.
We refer the reader to~\cite{RT10} for a proof of this equivalence.
\end{proof}

\section{SVI formulations}\label{sec:parameterizations}
We first recall here the original SVI formulation proposed in~\cite{Gatheral:2004}, and then present some alternative (but equivalent) ones.
We emphasize in particular that even though the original (``raw'') formulation is very tractable and has become popular 
with practitioners, it is difficult---seemingly impossible---to find precise conditions on the parameters to prevent arbitrage.

\subsection{The raw SVI parameterization}
For a given parameter set $\chi_R=\{a,b,\rho, m , \sigma\}$, the {\em raw SVI parameterization} of total implied variance reads:
\beq
w(k;\chi_R)=a+b\,\left\{\rho\,(k-m)+\sqrt{(k-m)^2+\sigma^2}\right\},
\label{eq:SVIraw}
\eeq
where $a\in\mathbb{R}$, $b\geq 0$, $|\rho|<1$, $m\in\mathbb{R}$, $\sigma>0$,
and the obvious condition $a+b\,\sigma\,\sqrt{1-\rho^2}\geq 0$, 
which ensures that $w(k;\chi_R)\geq 0$ for all $k\in\mathbb{R}$.
This condition indeed ensures that the minimum of the function $w(\cdot;\chi_R)$ is non-negative.
Note further that the function $k\mapsto w(k;\chi_R)$ is (strictly) convex on the whole real line.
It follows immediately that changes in the parameters have the following effects:
\begin{itemize}
\item Increasing $a$ increases the general level of variance, a vertical translation of the smile;
\item Increasing $b$ increases the slopes of both the put and call wings, tightening the smile;
\item Increasing $\rho$ decreases (increases) the slope of the left(right) wing, a counter-clockwise rotation of the smile;
\item Increasing $m$ translates the smile to the right;
\item Increasing $\sigma$ reduces the at-the-money (ATM) curvature of the smile.
\end{itemize}
{We exclude the trivial cases $\rho=1$ and $\rho=-1$, where the volatility smile is respectively strictly increasing and decreasing.
We also exclude the case $\sigma=0$ which corresponds to a linear smile.}

\subsection{The natural SVI parameterization}

For a given parameter set $\chi_N=\{\Delta, \mu,\rho,\omega,\zeta \}$, the {\em natural SVI parameterization} of total implied variance reads:
\begin{equation}\label{eq:SVInatural}
w(k;\chi _N) =
\Delta+\frac{\omega}{2}\left\{1+\zeta\rho \left(k - \mu\right) +
\sqrt{\left(\zeta(k-\mu)+\rho \right)^2+\left(1-\rho^2\right)}\right\},
\end{equation}
where $\omega\geq 0$, $\Delta\in\mathbb{R}$, $\mu\in\mathbb{R}$, $|\rho|<1$ and $\zeta>0$.
It is straightforward to derive the following correspondence between the raw
and natural SVI parameters:
\begin{lemma}
We have the following mapping of parameters between the raw and the natural SVI:
\begin{equation}\label{eq:NaturalToRaw}
\left(a, b, \rho, m, \sigma\right)
 =
\left(\Delta+\frac{\omega}{2}\left(1-\rho^2\right), \frac{\omega\zeta}{2}, \rho, \mu-\frac{\rho}{\zeta}, \frac{\sqrt{1-\rho^2}}{\zeta}\right),
\end{equation}
and its inverse transformation, between the natural and the raw SVI:
\begin{equation}\label{eq:RawToNatural}
\left(\Delta, \mu,\rho,\omega,\zeta \right)
=
\left(
a-\frac{\omega}{2}\left(1-\rho^2\right),
m+\frac{\rho\sigma}{\sqrt{1-\rho^2}},
\rho,
\frac{2b\sigma}{\sqrt{1-\rho^2}},
\frac{\sqrt{1-\rho^2}}{\sigma}\right).
\end{equation}
\end{lemma}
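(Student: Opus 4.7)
The plan is to verify the forward map \eqref{eq:NaturalToRaw} by direct substitution into the raw SVI formula \eqref{eq:SVIraw}, and then observe that the inverse map \eqref{eq:RawToNatural} is obtained by algebraically inverting the forward map on the parameter domain.

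First, I would focus on the radical $\sqrt{(k-m)^2+\sigma^2}$. Substituting $m=\mu-\rho/\zeta$ and $\sigma=\sqrt{1-\rho^2}/\zeta$ and multiplying through by $\zeta^2$, I expect the cross term to produce the identity
$$\zeta^2\bigl[(k-m)^2+\sigma^2\bigr]=\bigl(\zeta(k-\mu)+\rho\bigr)^2+(1-\rho^2),$$
since the $\rho^2/\zeta^2$ piece from $(k-m)^2$ combines with $\sigma^2=(1-\rho^2)/\zeta^2$ to give $1/\zeta^2$, whose $\zeta^2$-multiple is $1$, which in turn equals $\rho^2+(1-\rho^2)$. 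This key algebraic identity is the heart of the correspondence.

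Next, with $b=\omega\zeta/2$, the radical contribution $b\sqrt{(k-m)^2+\sigma^2}$ collapses to $(\omega/2)\sqrt{(\zeta(k-\mu)+\rho)^2+(1-\rho^2)}$, matching the square-root term of \eqref{eq:SVInatural}. For the non-radical part, expand $b\rho(k-m)=(\omega\zeta\rho/2)(k-\mu)+\omega\rho^2/2$ and add $a=\Delta+(\omega/2)(1-\rho^2)$: the two $\rho^2$ terms cancel the $1-\rho^2$ factor, leaving $\Delta+\omega/2+(\omega\zeta\rho/2)(k-\mu)$, which is precisely the non-radical part of $w(k;\chi_N)$. This establishes that the parameter values in \eqref{eq:NaturalToRaw} satisfy $w(k;\chi_R)=w(k;\chi_N)$ pointwise in $k$.

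For the inverse direction, I would note that the map $(\Delta,\mu,\rho,\omega,\zeta)\mapsto(a,b,\rho,m,\sigma)$ defined by \eqref{eq:NaturalToRaw} is smooth and, under the stated domain constraints ($\omega\geq0$, $\zeta>0$, $|\rho|<1$, etc.), algebraically invertible coordinate by coordinate: $\zeta=\sqrt{1-\rho^2}/\sigma$ is read off from the $\sigma$-formula, then $\omega=2b\sigma/\sqrt{1-\rho^2}$ from $b$, then $\mu=m+\rho\sigma/\sqrt{1-\rho^2}$ from $m$, and finally $\Delta=a-(\omega/2)(1-\rho^2)$. This yields \eqref{eq:RawToNatural} and closes the loop. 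The only potential obstacle is purely bookkeeping, namely keeping track of which of $\sigma>0$ and $\zeta>0$ forces the sign in extracting the square root, but both parameterizations impose strict positivity so no ambiguity arises.
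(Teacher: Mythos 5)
Your verification is correct and is exactly the direct algebraic substitution the paper has in mind (the paper states the correspondence is "straightforward to derive" and omits the proof). The key identity $\zeta^2[(k-m)^2+\sigma^2]=(\zeta(k-\mu)+\rho)^2+(1-\rho^2)$, the cancellation of the $\rho^2$ terms in the affine part, and the coordinate-by-coordinate inversion are all right, and the sign/positivity bookkeeping at the end is appropriate.
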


\subsection{The SVI Jump-Wings (SVI-JW) parameterization}

Neither the raw SVI nor the natural SVI parameterizations are intuitive to traders in the sense that a trader cannot be expected to carry around the typical value of these parameters in his head.
Moreover, there is no reason to expect these parameters to be particularly stable.
The {\em SVI-Jump-Wings (SVI-JW) parameterization}  of the implied variance $v$ (rather than the implied total variance $w$) was inspired by a similar parameterization attributed to Tim Klassen, then at Goldman Sachs.
For a given time to expiry $t>0$ and a parameter set $\chi_J=\{v_t,\psi_t,p_t,c_t,\widetilde v_t\}$ the SVI-JW parameters are defined from the raw SVI parameters as follows:
\begin{equation}\label{eq:rawToJW}
\left.
\begin{array}{rll}
v_t & = \displaystyle \frac{a+  b \,\left\{-\rho\,m+\sqrt{m^2+\sigma^2} \right\}}{t},\\
\psi_t & = \displaystyle \frac{1}{ \sqrt{w_t}}\,\frac{ b}{2}\,\left( - \frac{m}{{\sqrt{m^2 + { \sigma}^2}}} + \rho \right),\\
p_t & = \displaystyle \frac{1}{\sqrt{w_t}}b\left(1-\rho\right),\\
c_t & = \displaystyle \frac{1}{ \sqrt{w_t}}b\left(1+\rho\right),\\
\widetilde v_t & = \displaystyle \frac{1}{t}\left(a+ b\,\sigma\,\sqrt{1-\rho^2}\right),
\end{array}
\right.
\end{equation}
with $w_t:= v_tt$.
Note that this parameterization has an explicit dependence on the time to expiration $t$, and hence can be viewed as generalizing the raw (expiration-independent) SVI parameterization.
The SVI-JW parameters have the following interpretations:
\begin{itemize}
\item{$v_t$ gives {the} ATM variance;}
\item{$\psi_t$ gives {the} ATM skew;}
\item{$p_t$ gives the slope of the left (put) wing;}
\item{$c_t$ gives the slope of the right (call) wing;}
\item{$\widetilde v_t$ is the minimum implied variance.}
\end{itemize}

If smiles scaled perfectly as $1/\sqrt{w_t}$ (as is approximately the case empirically), these parameters would be constant, independent of the slice $t$.  
This makes it easy to extrapolate the SVI surface to expirations beyond the longest expiration in the data set.
Also note that by definition, for any $t>0$ we have
\[
\psi_t=\left.\frac{\partial \sigma_{\BS}(k,t)}{\partial k}\right|_{k=0}
\]
The choice of volatility skew as the skew measure rather than variance skew for example, reflects the empirical observation that volatility is roughly lognormally distributed.
Specifically, following the lines of~\cite[Chapter 7]{jimbook}, assume that the instantaneous variance process satisfies the SDE
\[
 dv_t = \alpha (v_t)\,dt   + \eta \sqrt{v_t}\,\beta(v_t)\, dZ_t,
\quad \text{for all }t\geq 0
\]
where $\eta>0$, $(Z_t)_{t\geq 0}$ is a standard Brownian motion
and $\alpha$ and $\beta$ two functions on $\mathbb{R}_+$ ensuring the existence of a unique strong solution to the SDE
(see for instance~\cite{KarSh} for exact conditions),
then the ATM variance skew
$$
\left.\lim_{t\to 0}\frac{\partial \sigma_{\BS}(k,t)^2} {\partial k}\right|_{k=0}
$$
exists and is proportional to~$\beta(v)$. 
If the variance process is lognormal so that~$\beta(v)$ behaves like~$\sqrt{v}$, the limit of the at-the-money {\em volatility} skew as time to expiry tends to zero is constant and independent of the volatility level.
This consistency of the SVI-JW parameterization with empirical volatility dynamics thus leads in practice to greater parameter stability over time.
The following lemma provides the inverse representation of~\eqref{eq:rawToJW}.
\begin{lemma}\label{lem:SVIJWtoSVI}
Assume that $m\ne 0$.
For any $t>0$, define the ($t$-dependent) quantities:
$$
\beta := \rho-\frac{2\psi_t\sqrt{w_t}}{b}
\quad\text{and}\quad
\alpha := \sgn(\beta)\sqrt{\frac{1}{\beta^2}-1}.
$$
where we have further assumed that $\beta \in [-1,1]$\footnote{The condition $\beta \in [-1,1]$ is equivalent to $-p_t \leq 2\psi_t \leq c_t$, i.e. to the convexity of the smile.}.
Then, the raw SVI and SVI-JW parameters are related as follows:
\begin{eqnarray*}
b&=&\frac{\sqrt{w_t}}{2}\left(c_t+p_t\right),\\
\rho&=&1-\frac{p_t\,\sqrt{w_t}}{b},\\
a&=&\widetilde v_{t}t-b\sigma\sqrt{1-\rho^2},\\
m &=&
\frac{\left(v_t-\widetilde v_{t}\right)t}{b\left\{-\rho+\sgn(\alpha)\sqrt{1+\alpha^2}-\alpha\sqrt{1-\rho^2}\right\}},
\\
\sigma&=& \alpha\, m.
\end{eqnarray*}
If $m=0$, then the formulae above for $b$, $\rho$ and $a$ still hold, but
$\sigma = \left(v_t t-a\right)/b$.
\end{lemma}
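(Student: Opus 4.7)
The plan is to invert the system \eqref{eq:rawToJW} for $(a,b,\rho,m,\sigma)$ by solving for the unknowns one at a time in a carefully chosen order, exploiting the algebraic simplicity of $p_t$ and $c_t$. These two quantities contain no information about $m$ or $\sigma$ and are affine in $b$ and $\rho$, so they are the natural starting point.

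I would first note that $p_t+c_t=2b/\sqrt{w_t}$ and $c_t-p_t=2b\rho/\sqrt{w_t}$; the first identity gives $b=\sqrt{w_t}(c_t+p_t)/2$, and the $p_t$ equation then delivers $\rho=1-p_t\sqrt{w_t}/b$, both in closed form and independent of $m$ and $\sigma$. With $b$ and $\rho$ now known, I would substitute into the $\psi_t$ relation, which reduces to
$$
\beta := \rho - \frac{2\psi_t\sqrt{w_t}}{b} = \frac{m}{\sqrt{m^2+\sigma^2}}.
$$
Since $\sigma>0$, the identity $1/\beta^2-1 = \sigma^2/m^2$ recovers $|\sigma/m|$; combined with $\sgn(\beta)=\sgn(m)$, the signed quantity $\alpha := \sgn(\beta)\sqrt{1/\beta^2-1}$ must equal $\sigma/m$, so $\sigma=\alpha m$.

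To isolate $m$ itself, I would subtract the $v_t$ and $\widetilde v_t$ defining relations, which eliminates $a$ and yields
$$
(v_t-\widetilde v_t)\,t = b\left(-\rho m + \sqrt{m^2+\sigma^2} - \sigma\sqrt{1-\rho^2}\right).
$$
Substituting $\sigma=\alpha m$ and the identity $\sqrt{m^2+\sigma^2}=\sgn(\alpha)\,m\sqrt{1+\alpha^2}$ (which uses $\sgn(m)=\sgn(\alpha)$) factors $m$ out of the right-hand side and produces the announced closed form for $m$. The expression for $a$ then follows directly from $\widetilde v_t\,t = a + b\sigma\sqrt{1-\rho^2}$, and $\sigma$ from $\sigma=\alpha m$.

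The main obstacle will be the careful handling of signs: one must verify that $\sgn(\alpha)=\sgn(\beta)=\sgn(m)$ so that $\sqrt{m^2+\sigma^2}$ can be rewritten without absolute values, and check that the denominator in the formula for $m$ is nonzero in the admissible region $\beta\in[-1,1]$ where the smile is convex. The degenerate case $m=0$ must be treated separately: there $\beta=0$ and $\alpha$ is undefined, but $v_t\,t=a+b\sigma$ reads off $\sigma=(v_t\,t - a)/b$ directly, while $a$, $b$, $\rho$ are recovered exactly as in the generic case.
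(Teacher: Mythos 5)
Your proposal is correct and follows essentially the same route as the paper's proof: solve for $b$ and $\rho$ from the $p_t$, $c_t$ relations, introduce $\beta$ from the $\psi_t$ relation to identify $\alpha=\sigma/m$ via the sign argument $\sgn(\alpha)=\sgn(\beta)=\sgn(m)$, subtract the $v_t$ and $\widetilde v_t$ equations to isolate $m$, and finish with $\sigma=\alpha m$ and $a$. You supply a few more intermediate identities (e.g.\ $p_t+c_t=2b/\sqrt{w_t}$, $\sqrt{m^2+\sigma^2}=\sgn(\alpha)\,m\sqrt{1+\alpha^2}$) than the paper, which merely asserts these steps, but the underlying argument is the same.
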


\begin{proof}
The expressions for $b$, $\rho$ and $a$ follow directly from~\eqref{eq:rawToJW}.
Assume that $m\ne 0$ and let $\beta:=\rho-2\psi_t \sqrt{w_t}/b$ and $\alpha:=\sigma/m\in\mathbb{R}$.
Then the expressions in~\eqref{eq:rawToJW} give
$$\beta = \frac{\sgn\left(\alpha\right)}{\sqrt{1+\alpha^2}},$$
which implies that
$$\alpha = \sgn(\beta)\sqrt{\frac{1}{\beta^2}-1}.$$
Using~\eqref{eq:rawToJW}, we also have
$$
\frac{\left(v_t-\widetilde v_{t}\right)\,t}{b}
=m\left\{-\rho+\sgn(\alpha)\,\sqrt{1+\alpha^2}-\alpha\,\sqrt{1-\rho^2}\right\},
$$
from which we deduce $m$ in terms of $\alpha$, and
the expression of $\sigma$ is recovered from the equality $\sigma=\alpha m$.
The expression for $\sigma$ in the case $m=0$ is straightforward from~\eqref{eq:rawToJW}.
\end{proof}

\subsection{Arbitrage and absence thereof in SVI parameterizations}
Given a volatility surface, it is natural to wonder whether it is free of arbitrage.
Since we can easily switch from any of the SVI formulations to either of the other two using
Lemma~\ref{eq:NaturalToRaw} and Lemma~\ref{lem:SVIJWtoSVI}, we shall state the following results only for the raw SVI parameterization~\eqref{eq:SVIraw}.
Referring to~\eqref{eq:SVIraw} as a volatility surface is a slight abuse of language 
since~\eqref{eq:SVIraw} is really an expiry-independent slice parameterization.
A volatility surface is thus understood as a (discrete) collection of slices, with a different set of parameters for each expiry.
Checking calendar arbitrage in the sense of Lemma~\ref{lem:noCalendarArb} is then equivalent to checking 
for calendar arbitrage for any pair of expiries $t_1$ and $t_2$.
The following lemma establishes a sufficient condition for the absence of calendar spread arbitrage.

\begin{lemma}
The raw SVI surface~\eqref{eq:SVIraw} is free of calendar spread arbitrage if a certain quartic polynomial (given in~\eqref{eq:Quartic} below) has no real root.
\end{lemma}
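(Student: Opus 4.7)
The plan is to reduce the calendar-spread condition of Lemma~\ref{lem:noCalendarArb} to a statement about intersections of two SVI slices, and then reformulate the intersection equation as a polynomial one via repeated squaring. Since the raw SVI parameterization is expiry-independent, a surface is a finite collection of slices $w_i(k):=w(k;\chi_R^{(i)})$, $i=1,\dots,n$, with $t_1<\cdots<t_n$. By Lemma~\ref{lem:noCalendarArb}, calendar spread arbitrage is absent if and only if, for each consecutive pair $t_i<t_{i+1}$,
\[
w_{i+1}(k)\geq w_i(k), \qquad \text{for all }k\in\mathbb{R}.
\]
Thus it suffices to establish this pointwise inequality for an arbitrary pair of slices $w_1,w_2$.

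Next, observe that both $w_1$ and $w_2$ are continuous and each is asymptotically linear as $|k|\to\infty$ with strictly positive slopes on both wings. If the two curves never cross, then by continuity exactly one of $w_2\geq w_1$ or $w_2\leq w_1$ holds globally, and the correct alternative can be selected by evaluating at a single convenient point (for example at $k=0$, or by comparing the asymptotic slopes $b_i(1\pm\rho_i)$ at $\pm\infty$). Hence the whole question collapses to: \emph{does the equation $w_2(k)=w_1(k)$ admit a real solution?}

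Writing out this equation explicitly,
\[
\bigl(a_2-a_1\bigr)+b_2\rho_2(k-m_2)-b_1\rho_1(k-m_1)
=b_1\sqrt{(k-m_1)^2+\sigma_1^2}-b_2\sqrt{(k-m_2)^2+\sigma_2^2},
\]
we see it contains two square roots. Isolating one root, squaring, collecting all non-radical terms on one side and the remaining radical on the other, and squaring once more, eliminates both radicals. Each squaring raises the degree by at most two, so the process produces a polynomial equation of degree at most four in $k$, namely the quartic referenced as~\eqref{eq:Quartic}. Crucially, squaring can only enlarge the solution set, so every real $k$ solving the original equation is a real root of this quartic. Consequently, if the quartic has no real roots, then the two slices cannot cross, and by the argument of the preceding paragraph the appropriate calendar-spread inequality holds throughout $\mathbb{R}$.

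The only non-routine part of the proof is the double-squaring bookkeeping that produces the quartic in the form~\eqref{eq:Quartic} with explicit coefficients in the ten parameters $(a_i,b_i,\rho_i,m_i,\sigma_i)_{i=1,2}$; this is tedious but mechanical. I would also flag that the lemma gives only a \emph{sufficient} condition: extraneous real roots can be introduced by squaring, so a quartic with real roots need not correspond to an actual crossing. A genuine necessary-and-sufficient statement would require checking, for each real root of the quartic, whether the radical signs agree with those in the original equation; but for the stated sufficient direction, no such verification is needed.
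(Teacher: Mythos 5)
Your proof is correct and follows essentially the same approach as the paper: reduce calendar-spread arbitrage to non-intersection of slices, then eliminate the two radicals by isolating and squaring twice to produce a quartic whose lack of real roots precludes crossings. The spurious-roots caveat you raise is the same one the paper handles in a following remark; your additional observation that non-crossing alone is not enough and one must still confirm the correct ordering at a single point (e.g.\ $k=0$ or via asymptotic slopes) is a sound clarification that the paper leaves implicit.
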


\begin{proof}
By definition, there is no calendar arbitrage if for any two dates $t_1\ne t_2$,
the corresponding slices $w\left(\cdot,t_1\right)$ and $w\left(\cdot,t_2\right)$ do not intersect.
Let these two slices be characterised by the sets of parameters
$\chi_1:=\left\{a_1,b_1,\sigma_1,\rho_1,m_1\right\}$ and
$\chi_2:=\left\{a_2,b_2,\sigma_2,\rho_2,m_2\right\}$,
and assume for convenience that $0<t_1<t_2$.
We therefore need to determine the (real) roots of the equation
$w\left(k,t_1\right)=w\left(k,t_2\right)$.
The latter is equivalent to
\begin{equation}\label{eq:SVIIntersect}
a_1 + b_1\left\{\rho_1 \left(k - m_1\right)  + \sqrt{\left(k - m_1\right)^2 +\sigma^2_1}\right\}
= a_2 + b_2\left\{\rho_2 \left(k -m_2\right) + \sqrt {\left(k - m_2\right)^2+\sigma^2_2 }\right\}.
\end{equation}
Leaving $\sqrt{\left(k - m_1\right)^2 +\sigma^2_1}$ on one side, squaring the equality and rearranging it leads to
$$
2b_2\left(\alpha+\beta k\right)\sqrt{\left(k-m_2\right)^2+\sigma^2_2}
 = b_1^2\left\{\left(k-m_1\right)^2+\sigma^2_1 \right\}-b_2^2\left\{\left(k-m_2\right)^2 +\sigma^2_2\right\}
 -\left(\alpha+\beta k\right)^2,
$$
where
$\alpha:=a_2-a_1+b_1\rho_1 m_1-b_2\rho_2 m_2$
and
$\beta:=b_2\rho_2-b_1\rho_1$.
Squaring the last equation above gives a quartic polynomial equation of the form
\begin{equation}\label{eq:Quartic}
\alpha_4\,k^4+\alpha_3\,k^3+\alpha_2\,k^2+\alpha_1\,k+\alpha_0=0,
\end{equation}
where each of the coefficients lengthy yet explicit expressions\footnote{Explicit expressions for these coefficients can be found in the R-code posted on \url{http://faculty.baruch.cuny.edu/jgatheral}.} in terms of
the parameters $\left\{a_1,b_1,\rho_1,\sigma_1,m_1\right\}$ and
$\left\{a_2,b_2,\rho_2,\sigma_2,m_2\right\}$.
If this quartic polynomial has no real root, then the slices do not intersect and the lemma follows.
Roots of a quartic polynomial are known in closed-form thanks to Ferrari and Cardano~\cite{Cardano}.
Thus there exist closed-form expressions in terms of $\chi_1$ and $\chi_2$
for the possible intersection points of the two SVI slices.
\end{proof}

\begin{remark}
If the quartic polynomial~\eqref{eq:Quartic} has one or more real roots,
we need to check whether the latter are indeed solutions of the original problem~\eqref{eq:SVIIntersect},
or spurious solutions arising from the two squaring operations.
The absence of real roots of the quartic polynomial is clearly a sufficient---but not necessary---condition.
\end{remark}
\begin{remark}
By a careful study of the minima and the shapes of the two slices $w(\cdot,t_1)$ and $w(\cdot,t_2)$,
it is possible to determine a set of conditions on the parameters ensuring no calendar spread arbitrage.
However these conditions involve tedious combinations of the parameters and will hence not match the computational simplicity of the lemma.
\end{remark}

For a given slice, we now wish to determine conditions on the parameters of the raw SVI formulation~\eqref{eq:SVIraw}
such that butterfly arbitrage is excluded.
By Lemma~\ref{eq:gBut}, this is equivalent to showing 
(i) that the function~$g$ defined in~\eqref{eq:gBut} is always positive and 
(ii) that call prices converge to zero as the strike tends to infinity.
Sadly, the highly non-linear behaviour of~$g$ makes it seemingly impossible to find general conditions on the parameters that would eliminate butterfly arbitrage.
We provide below an example where butterfly arbitrage is violated.
Notwithstanding our inability to find general conditions on the parameters that would preclude arbitrage, in Section~\ref{sec:sviSurfaces}, we will introduce a new sub-class of SVI volatility surface for which the absence of butterfly arbitrage is guaranteed for all expiries.

\begin{example}\label{ex:AxelVogt}(From Axel Vogt on wilmott.com)
Consider the raw SVI parameters:
\begin{equation}\label{eq:VogtParams}
\left(a, b, m, \rho, \sigma\right) =
\left(-0.0410,0.1331,0.3586,0.3060,0.4153\right),
\end{equation}
with $t=1$.
These parameters give rise to the total variance smile~$w$ and the function~$g$
(defined in~\eqref{eq:gBut}) on Figure~\ref{fig:VogtVol}, where
the negative density is clearly visible.
\begin{figure}[htb!]
\begin{center}
\subfigure{\includegraphics[scale=0.27]{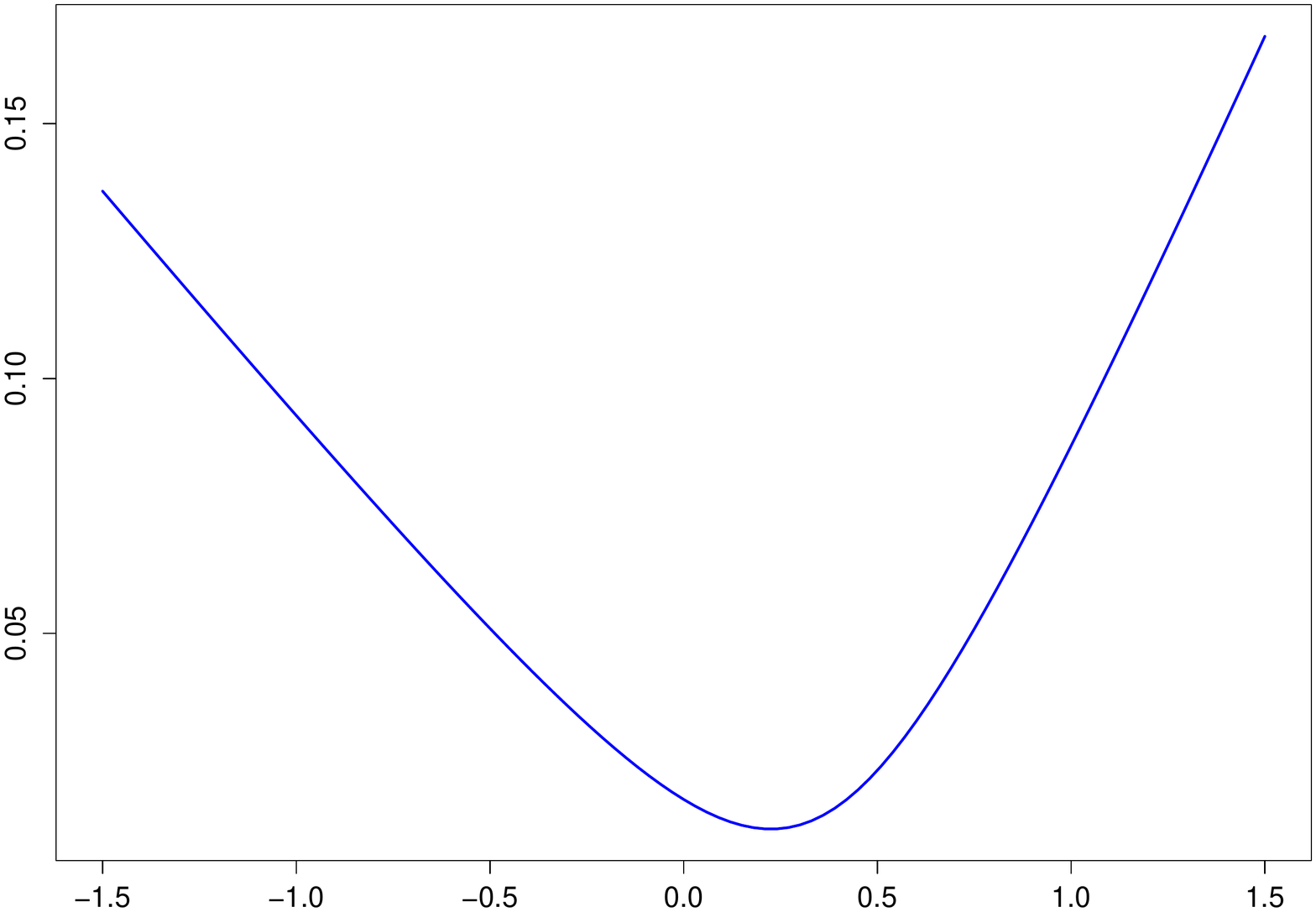}}
\subfigure{\includegraphics[scale=0.27]{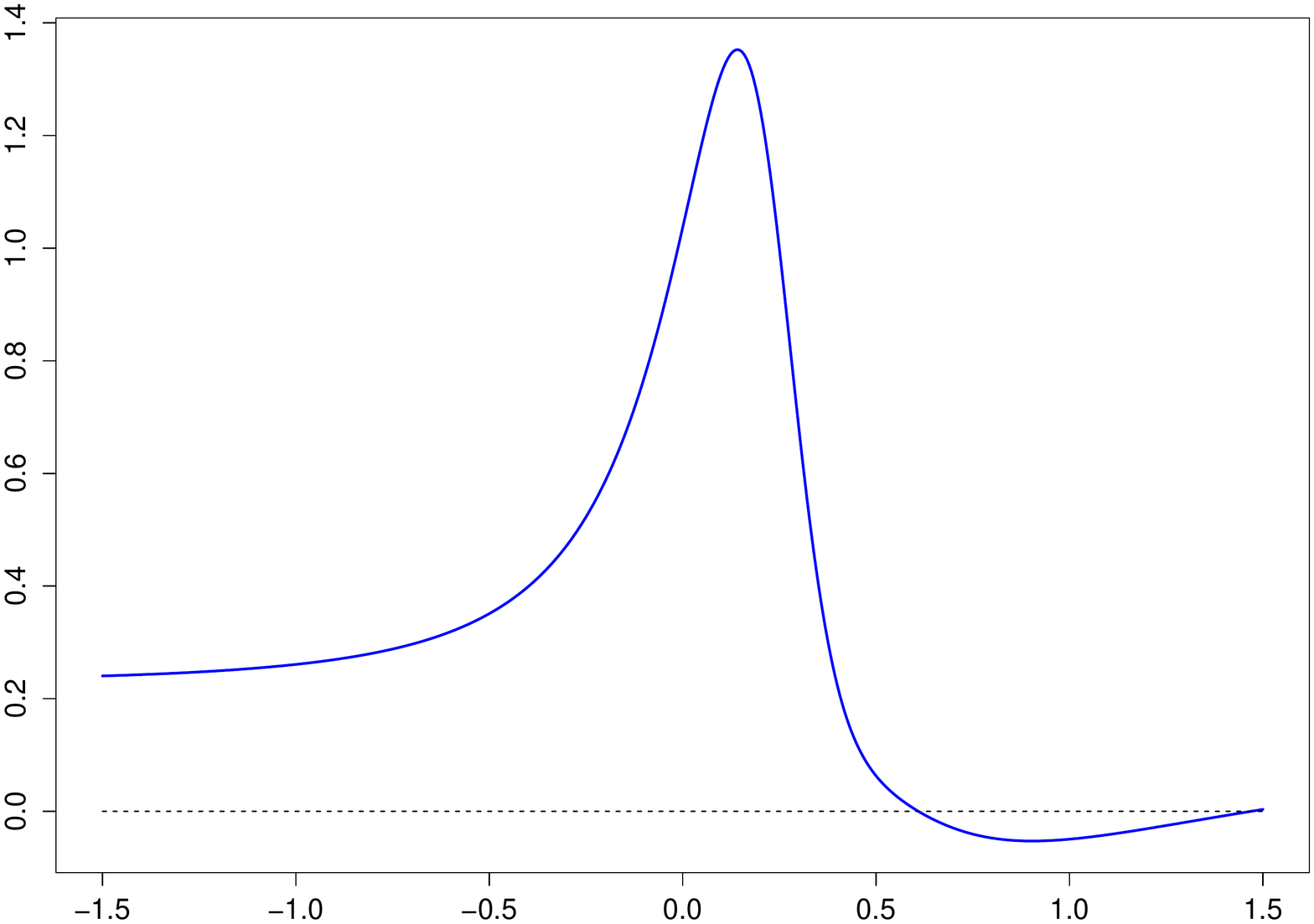}}
\caption{Plots of the total variance smile~$w$ (left) and the function~$g$ defined in~\eqref{eq:gBut} (right), using the parameters~\eqref{eq:VogtParams}.}
\label{fig:VogtVol}
\end{center}
\end{figure}
\end{example}

\section{Surface SVI: A surface free of static arbitrage}\label{sec:sviSurfaces}\label{sec:ssvi}

We now introduce a class of SVI volatility surfaces---which we shall call SSVI (for `Surface SVI')---as 
an extension of the natural parameterization~\eqref{eq:SVInatural}.
For any maturity~$t\geq 0$, define the at-the-money (ATM) implied total variance~$\theta_t:=\sigma_{\BS}^2(0,t)t$.
We shall assume that the function~$\theta$ is at least of class $\mathcal{C}^1$ on $\R_+^*$.
An ATM option with zero time to expiry has no value so $\theta_0:=\lim_{t\to 0}\theta_t=0$.
\begin{definition}
Let $\varphi$ be a smooth function from $\R_+^*$ to $\R_+^*$  such that the limit 
$\lim_{t\to 0}\theta_t\varphi(\theta_t)$ exists in~$\mathbb{R}$.
We refer to as {\em SSVI} the surface defined by
\beq\label{eq:ssvi}
w(k,\theta_t)
= \frac{\theta_t}{2}\left\{1+\rho\varphi(\theta_t) k + \sqrt{\left(\varphi(\theta_t){k}+\rho\right)^2
+(1-\rho^2 })\right\}.
\eeq
\end{definition}
From Section~\ref{sec:parameterizations}, SSVI corresponds to 
the natural SVI volatility surface parameterization~\eqref{eq:SVInatural} with
$\chi_N = \left\{0,0,\rho,\theta_t,\varphi(\theta_t)\right\}$.
Note that this representation amounts to considering the volatility surface in terms of ATM variance time, instead of standard calendar time, similar in spirit to the stochastic subordination of~\cite{Clark}.

\begin{remark}
In the parameterization~\eqref{eq:ssvi}, the ATM variance curve~$\theta_t$ may be viewed as a (vector) parameter of the volatility surface.  
Moreover, this parameter is directly observable given market prices for a finite set of expiries, and can be considered well-known to traders even for expiries which are not explicitly quoted.
The explicit reference to~$\theta_t$ also emphasizes the importance of studies such as~\cite{DeMarco} of the ATM variance structure in classical models which may shed some light on how to impose dynamics on SSVI.
\end{remark}

The ATM implied total variance is $\theta_t=\sigma^2_{\BS}(0,t)\,t$ and the ATM volatility skew is given by
\begin{equation}\label{eq:ATMVolSkew}
\left.\partial_k \sigma_{\BS}(k,t)\right|_{k=0}
=\left.\frac{1}{2\sqrt{\theta_t t}}\p_k w(k,\theta_t)\right|_{k=0}
=\frac{\rho\sqrt{\theta_t}}{2\sqrt{t}}\varphi(\theta_t).
\end{equation}
Furthermore the smile is symmetric around at-the-money if and only if $\rho=0$.
This is consistent with~\cite[Theorem 3.4]{CarrLee} which states that in a standard stochastic volatility model,
the smile is symmetric if and only if the correlation between the stock price and its instantaneous volatility is null.
Since $\theta_0=0$, we have at time $t=0$:
\beq
w(k,\theta_0)=\frac{1}{2}\,\phi_0\,\left(\rho k+|k|\right), \quad\text{for any }k\in\mathbb{R},
\label{eq:timeZeroSmile}
\eeq
where $\phi_0:=\lim_{\theta\to 0}\theta\varphi(\theta)$.
$\phi_0=0$ is characteristic of stochastic volatility models as in Example~\ref{ex:HestonExample};
$\phi_0>0$ as in Example~\ref{ex:PowerLaw} gives a V-shaped time zero smile which is characteristic of models with jumps and in particular, characteristic of empirically observed volatility surfaces.
For notational convenience, we shall always assume that $\lim_{t\nearrow\infty}\theta_t=\infty$.
As proved in~\cite{RT10}, this is equivalent (assuming no interest rate) to the stock price 
(assumed to be a non-negative martingale) to converging to zero as $t$ tends to infinity.
Although this holds in many popular models (Black-Scholes, Heston, exponential L\'evy), 
this is not always true, see~\cite{Hobson} for counter-examples.
If $\lim_{t\nearrow\infty}\theta_t$ is finite, all our results remain valid, but only on the support of 
the function $t\mapsto \theta_t$.

The following theorem gives precise necessary and sufficient conditions to ensure that the SSVI volatility surface~\eqref{eq:ssvi}
is free of calendar spread arbitrage (Lemma~\ref{lem:noCalendarArb})
and also matches the term structure of ATM volatility and the term structure of the ATM volatility skew.
\begin{theorem}\label{thm:arbitrarySkew}
The SSVI surface~\eqref{eq:ssvi} is free of calendar spread arbitrage if and only if 
\begin{enumerate}
\item $\partial_t \theta_t \geq 0$\label{item:increasingTotVar}, for all $t\geq 0$;
\item $0\leq \partial_\theta (\theta \varphi(\theta))  \leq 
\frac{1}{\rho^2}\left(1+\sqrt{1-\rho^2}\right)\varphi(\theta)$, for all $\theta>0$,
\end{enumerate}
where the upper bound is infinite when $\rho=0$.
\end{theorem}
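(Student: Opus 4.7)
The plan is to translate calendar-spread arbitrage (Lemma~\ref{lem:noCalendarArb}) into a pointwise algebraic condition on the ingredients $(\rho,\varphi,\theta)$ of SSVI. Because~\eqref{eq:ssvi} depends on $t$ only through $\theta_t$, the chain rule gives $\partial_t w(k,\theta_t)=\theta'_t\,\partial_\theta w(k,\theta)\big|_{\theta=\theta_t}$. Evaluating at $k=0$ in~\eqref{eq:ssvi} yields $w(0,\theta)=\theta$, hence $\partial_\theta w(0,\theta)=1>0$; the sign constraint at $k=0$ therefore forces $\theta'_t\geq 0$, proving the necessity of~\ref{item:increasingTotVar}. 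Once condition~\ref{item:increasingTotVar} is in force, the remaining requirement is
$\partial_\theta w(k,\theta)\geq 0$ for all $k\in\mathbb{R}$, $\theta>0$,
and the task is to show that this is equivalent to the double inequality in item~2.

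The next step is to compute $\partial_\theta w$ directly from~\eqref{eq:ssvi}. Setting $\tilde u:=\varphi(\theta)k$, $\tilde D:=\sqrt{(\tilde u+\rho)^2+(1-\rho^2)}$, and $\psi(\theta):=\theta\varphi(\theta)$ (so that $\theta\varphi'(\theta)=\psi'(\theta)-\varphi(\theta)$), a routine differentiation and multiplication through by $\tilde D$ rearranges $2\tilde D\,\partial_\theta w\geq 0$ into
$$P(\tilde u)+\lambda\,Q(\tilde u)\;\geq\; 0,\qquad \lambda:=\frac{\psi'(\theta)}{\varphi(\theta)},$$
with $P(\tilde u):=\tilde D+1+\rho\tilde u$ and $Q(\tilde u):=\tilde u\bigl[\rho(\tilde D+1)+\tilde u\bigr]$. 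Using the identity $\tilde D^{2}-(\tilde u+\rho)^{2}=1-\rho^{2}$, one checks that $P(\tilde u)>0$ for every $\tilde u\in\mathbb{R}$, so the constraint is automatically satisfied wherever $Q(\tilde u)\geq 0$ and the whole problem reduces to controlling $P+\lambda Q$ on the set $\{Q<0\}$.

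The two bounds on $\lambda$ are then read off from the shape of $(P,Q)$. As $|\tilde u|\to\infty$ one has $P(\tilde u)=O(|\tilde u|)$ while $Q(\tilde u)\sim \tilde u^{2}>0$, so any $\lambda<0$ eventually forces $P+\lambda Q<0$; hence $\lambda\geq 0$, i.e.\ $\psi'(\theta)\geq 0$. For the upper bound, the map $(\tilde u,\rho)\mapsto(-\tilde u,-\rho)$ leaves both $P$ and $Q$ invariant, so one may assume $\rho>0$. A direct analysis shows that in this case $Q<0$ precisely on the open interval $\tilde u\in(-2\rho,0)$, and the condition $P+\lambda Q\geq 0$ becomes
$$\lambda \;\leq\; \inf_{\tilde u\in(-2\rho,0)}\frac{P(\tilde u)}{-Q(\tilde u)}.$$
A short calculus argument locates the minimizer at the interior point $\tilde u=-\rho$, where $\tilde D=\sqrt{1-\rho^{2}}$ and the ratio evaluates exactly to $\rho^{-2}\bigl(1+\sqrt{1-\rho^{2}}\bigr)$; this yields the announced upper bound $\psi'(\theta)\leq \rho^{-2}(1+\sqrt{1-\rho^{2}})\,\varphi(\theta)$. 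For $\rho=0$ one has $Q(\tilde u)=\tilde u^{2}\geq 0$ everywhere and the upper bound is vacuous, consistent with the convention in the statement.

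All of the manipulations above are equivalences, so the two displayed conditions are simultaneously necessary and sufficient. The main technical obstacle I anticipate is the optimization on $(-2\rho,0)$: one must confirm that $\tilde u=-\rho$ is the \emph{global} minimizer (not merely a critical point) of $P/(-Q)$ on that interval, which I would establish by computing the limits of $P/(-Q)$ at both endpoints (the ratio blows up at $0^{-}$ and at $(-2\rho)^{+}$) and checking that $\tilde u=-\rho$ is the unique stationary point, so that the value $\rho^{-2}(1+\sqrt{1-\rho^{2}})$ is indeed the infimum.
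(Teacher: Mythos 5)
Your argument is essentially the same as the paper's: after reducing calendar spread arbitrage to $\partial_\theta w\ge 0$, you decompose it into a $\gamma$-independent positive part and a $\gamma$-linear part and optimize over $k$; your $P(\tilde u)$ and $Q(\tilde u)$ are just the paper's $\psi_0$ and $\psi_1$ multiplied through by $\tilde D=\sqrt{\tilde u^2+2\rho\tilde u+1}$, your $\lambda$ is the paper's $\gamma$, the set $\{Q<0\}$ is exactly the complement of the paper's $\mathcal{D}_\rho$, and the minimizer $\tilde u=-\rho$ and the resulting bound coincide. The two minor embellishments—deriving necessity of $\partial_t\theta_t\ge 0$ by evaluating at $k=0$, and using the $(\tilde u,\rho)\mapsto(-\tilde u,-\rho)$ symmetry to assume $\rho>0$—are sound and tidy up the exposition, but do not change the route.
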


In particular, this theorem implies that the SSVI surface~\eqref{eq:ssvi} 
is free of calendar spread arbitrage if the skew in total variance terms is
monotonically increasing in trading time and the skew in implied variance terms is monotonically decreasing in trading time.
In practice, any reasonable skew term structure that a trader defines has these properties.

\begin{proof}
Since the definition of calendar spread arbitrage does not depend on the log-moneyness $k$,
there is no loss of generality in assuming $k$ fixed.
First note that
$\partial_t w(k,\theta_t)=\partial_\theta w(k,\theta_t)\partial_t \theta_t$
so the SSVI volatility surface~\eqref{eq:ssvi} is free of calendar spread arbitrage if
$\partial_\theta w(k,\theta) \geq 0$ for all $\theta>0$.

Consider first the case $|\rho|<1$.
To proceed, we compute, for any $\theta>0$,
$$
2\partial_\theta w (k,\theta)=\psi_0(x,\rho)
+\gamma(\theta) \psi_1(x,\rho),
$$
with $x:= k\varphi(\theta)$, $\gamma(\theta) := \partial_\theta(\theta\varphi(\theta))/\varphi(\theta)$,
$$
\psi_0(x,\rho) := 1+\frac{1+\rho x }{\sqrt{x^2+2\rho x +1}}
\quad\text{and}\quad
\psi_1(x,\rho):= x \left\{\frac{x+\rho }{\sqrt{x^2+2\rho x +1}}+\rho \right\}.
$$
For any $|\rho|<1$, $\psi_0(x,\rho)$ is strictly positive for all $x\in\R$.
Now define the set
$$
\Dd_\rho =
\left\{
\begin{array}{ll}
(-\infty,0)\cup (-2\rho,\infty),
\quad & \text{if }\rho<0,\\
(-\infty,-2\rho)\cup (0,\infty),
\quad & \text{if }\rho>0,\\
\R\setminus\{0\},
\quad & \text{if }\rho=0.
\end{array}
\right.
$$
Then 
$\psi_1(\cdot,\rho) >0 \text{ if } x \in \Dd_\rho$
and 
$\psi_1(\cdot,\rho) <0 \text{ if } x \in \R \setminus \left( \Dd_\rho\cup\{0,-2\,\rho\} \right)$.  
It follows that
\beq
\partial_\theta w(k,\theta)\geq 0 \text{ if and only if }
\left\{
\begin{array}{ll}
\displaystyle \gamma(\theta)\geq -\frac{\psi_0(x,\rho)}{\psi_1(x,\rho)}, 
 & \text{for }x\in\Dd_\rho,\\
\displaystyle \gamma(\theta)\leq -\frac{\psi_0(x,\rho)}{\psi_1(x,\rho)}, 
 & \text{for }x\in\R\setminus\left(\Dd_\rho\cup\{0,-2\,\rho\}\right),\\
\end{array}
\right.
\label{eq:iffConditions}
\eeq
When $x\in\{0,-2\rho\}$, then $\psi_1(x,\rho)=0$ and so $\partial_\theta w(k,\theta)\geq 0$.  
The inequalities \eqref{eq:iffConditions} thus give necessary and sufficient conditions 
for absence of calendar spread arbitrage for any given $x \in \R$.
To determine the tightest possible bounds on $\gamma(\theta)$, we compute
$$
\sup_{x\in\Dd_\rho}-\frac{\psi_0(x,\rho)}{\psi_1(x,\rho)} = 0
\quad\text{and}\quad
\inf_{x\in\R\setminus\left(\Dd_\rho\cup\{0,-2\rho\}\right)}-\frac{\psi_0(x,\rho)}{\psi_1(x,\rho)} = 
\frac{1+\sqrt{1-\rho^2}}{\rho^2}.
$$
The supremum in the first equality is never attained (the function increases to zero from below as $|x|$ tends to infinity).  However the infimum in the second equality is attained at $x=-\rho\notin\mathcal{D}_\rho$.  It follows that
$$
\partial_\theta w(k,\theta)\geq 0 \text{ if and only if }
0 \leq  \gamma(\theta)\leq \frac{1+\sqrt{1-\rho^2}}{\rho^2}.
$$
Note that when $\rho=0$, the infimum above is taken over an empty set, and there is hence no upper bound.

When $\rho=1$, for any $(k,\theta)\in\R\times (0,\infty)$, we have
$$
\p_\theta w(k,\theta) = \left( 1+\frac{1+x}{\sqrt{(1+x)^2}} \right)\,\left(1 + \gamma(\theta)\,x\right)
= \left\{\begin{array}{ll}
2\,(1 + \gamma(\theta)\,x) &\text{ if } x \geq -1,\\
0 & \text{ otherwise}.
\end{array}
\right.
$$
Obviously, $\p_\theta w(k,\theta) \geq 0$ if $x \geq 0$.  
For $x>-1$,
clearly $\p_\theta w(k,\theta) \geq 0$ if and only if $\gamma(\theta)\in [0,1]$.
Similarly, with $\rho=-1$, we have
$$
\p_\theta w(k,\theta) = \left( 1+\frac{1-x}{\sqrt{(1-x)^2}} \right)\,\left(1 - \gamma(\theta)\,x\right)
 = \left\{\begin{array}{ll}
2\,(1 - \gamma(\theta)\,x) &\text{ if } x \leq 1,\\
0 & \text{ otherwise}.
\end{array}
\right.
$$
Again $\p_\theta w(k,\theta) \geq 0$ if $x \leq 0$, and for $x\leq 1$,
$\p_\theta w(k,\theta) \geq 0$
if and only if 
$\gamma(\theta)\in [0,1]$.
\end{proof}

The following lemma is a straightforward consequence of~\eqref{eq:NaturalToRaw} and~\eqref{eq:rawToJW}.
\begin{lemma}\label{lem:ssviToJW}
The SVI-JW parameters associated with the SSVI surface~\eqref{eq:ssvi} are
\beas
v_t&=&\theta_t/t,\\
\psi_t&=&\frac{1}{2}\,\rho\,\sqrt{\theta_t}\,\varphi(\theta_t),\\
p_t&=&\frac12\,\sqrt{\theta_t}\,\varphi(\theta_t)\,(1-\rho),\\
c_t&=&\frac 12\,\sqrt{\theta_t}\,\varphi(\theta_t)\,(1+\rho),\\
\widetilde v_t&=&\frac{\theta_t}{t}\,(1-\rho^2).
\eeas
\end{lemma}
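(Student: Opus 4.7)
The plan is to compose the two parameter transformations already at our disposal. The paragraph immediately following Definition~\ref{def:ssvi} (where SSVI is stated) identifies the SSVI surface with the natural SVI parameterization~\eqref{eq:SVInatural} using $\chi_N=\{0,0,\rho,\theta_t,\varphi(\theta_t)\}$. Applying~\eqref{eq:NaturalToRaw} then yields the raw parameters explicitly in terms of $\theta_t$ and $\varphi(\theta_t)$, namely
$$
a=\tfrac{1}{2}\theta_t(1-\rho^2),\quad b=\tfrac{1}{2}\theta_t\varphi(\theta_t),\quad m=-\frac{\rho}{\varphi(\theta_t)},\quad \sigma=\frac{\sqrt{1-\rho^2}}{\varphi(\theta_t)}.
$$
Substituting these into~\eqref{eq:rawToJW} will give the five claimed formulas after routine algebra.

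The key simplification that makes the algebra collapse is the identity
$$
m^2+\sigma^2=\frac{\rho^2+(1-\rho^2)}{\varphi(\theta_t)^2}=\frac{1}{\varphi(\theta_t)^2},
$$
so, using the assumed positivity of $\varphi(\theta_t)$, $\sqrt{m^2+\sigma^2}=1/\varphi(\theta_t)$. From this I will read off two facts that drive everything: $-m/\sqrt{m^2+\sigma^2}=\rho$ and $-\rho m+\sqrt{m^2+\sigma^2}=(1+\rho^2)/\varphi(\theta_t)$. The first immediately gives the expression for $\psi_t$ once I verify $w_t=\theta_t$. The second, combined with the expressions for $a$ and $b$, yields $a+b\bigl(-\rho m+\sqrt{m^2+\sigma^2}\bigr)=\tfrac12\theta_t(1-\rho^2)+\tfrac12\theta_t(1+\rho^2)=\theta_t$, hence $v_t=\theta_t/t$ and $w_t=v_t t=\theta_t$.

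Once $\sqrt{w_t}=\sqrt{\theta_t}$ is available, the formulas for $p_t$ and $c_t$ are direct substitutions of $b=\tfrac12\theta_t\varphi(\theta_t)$ into $(1/\sqrt{w_t})b(1\mp\rho)$. For $\widetilde v_t$ I compute $b\sigma\sqrt{1-\rho^2}=\tfrac12\theta_t\varphi(\theta_t)\cdot\sqrt{1-\rho^2}/\varphi(\theta_t)\cdot\sqrt{1-\rho^2}=\tfrac12\theta_t(1-\rho^2)$, so $a+b\sigma\sqrt{1-\rho^2}=\theta_t(1-\rho^2)$ and division by $t$ gives the claim.

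There is no genuine obstacle here—the statement is a corollary of chaining~\eqref{eq:NaturalToRaw} into~\eqref{eq:rawToJW}. The only point requiring care is the sign choice in $\sqrt{m^2+\sigma^2}=1/\varphi(\theta_t)$, which is legitimate because $\varphi$ is assumed strictly positive in the definition of SSVI. Note that there is no need to separately invoke Lemma~\ref{lem:SVIJWtoSVI} (the inverse direction) since we only require the forward map raw$\to$JW, which is given unconditionally by~\eqref{eq:rawToJW}.
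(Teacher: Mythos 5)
Your proposal is correct and follows exactly the route the paper intends: the paper declares the lemma ``a straightforward consequence of~\eqref{eq:NaturalToRaw} and~\eqref{eq:rawToJW}'' and omits the computation, which you have carried out in full, with the identity $m^2+\sigma^2 = 1/\varphi(\theta_t)^2$ correctly identified as the simplification that makes everything collapse. (Minor note: the SSVI definition in the paper has no label, so your reference \verb|\ref{def:ssvi}| would be unresolved; cite the equation~\eqref{eq:ssvi} or the surrounding text instead.)
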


We now give several examples of SSVI implied volatility surfaces~\eqref{eq:ssvi}.
\begin{example}{\bf  A Heston-like parameterization}\label{ex:HestonExample}\\
\noindent Consider the function $\varphi$ defined by
$$
\varphi(\theta)\equiv\frac{1}{\lambda\theta}\left\{1-\frac{1-\Ex^{-\lambda\theta}}{\lambda\theta}\right\},
$$
with $\lambda>0$.
Then for all $\theta>0$, we immediately obtain
$$
\p_\theta\left(\theta\varphi(\theta)\right)
=\frac{\Ex^{-\lambda\theta}\left(\Ex^{ \lambda\theta}-1-\lambda\theta\right)}{\lambda^2\theta^2}>0
\quad\text{and}\quad
\frac{\p_\theta\left(\theta\varphi(\theta)\right)}{\varphi(\theta)}
 = \frac{1-(1+\lambda\theta)\Ex^{-\lambda\theta}}{\Ex^{-\lambda\theta}+\lambda\theta-1}.
$$
For any $\lambda>0$, the map $\theta\mapsto \p_\theta\left(\theta\varphi(\theta)\right) / \varphi(\theta)$
is strictly decreasing on $(0,\infty)$ with limit as $\theta$ tends to zero equal to one.
Since the quantity $(1+\sqrt{1-\rho^2})/\rho^2$ is greater than one for any $\rho\in [-1,1]$, 
the conditions of Theorem~\ref{thm:arbitrarySkew} are satisfied.
This function is consistent with the implied variance skew in the Heston model as shown in~\cite[Equation 3.19]{jimbook}.
\end{example}

\begin{example}{\bf  Power-law parameterization}\label{ex:PowerLaw}\\
\noindent Consider
$\varphi(\theta)=\eta\theta^{-\gamma}$
with $\eta>0$ and $0<\gamma<1$.
Then 
$\p_\theta\left(\theta\varphi(\theta)\right) / \varphi(\theta) = 1-\gamma \in(0,1)$
holds for all $\theta>0$, and hence the conditions of Theorem~\ref{thm:arbitrarySkew} are satisfied.
In particular if $\gamma=1/2$ then Lemma~\ref{lem:ssviToJW} implies that the SVI-JW parameters $\psi_t$, $p_t$, and $c_t$ associated with the SSVI volatility surface~\eqref{eq:ssvi} are constant and independent of the time to expiration~$t$.
Furthermore, Equation~\ref{eq:ATMVolSkew} implies that the ATM volatility skew is given by
$$
\left.\p_k \sigma_{\BS}(k,t)\right|_{k=0}=\frac{\rho\,\eta}{2\sqrt{t}}.
$$
\end{example}

The following theorem provides sufficient conditions for a SSVI surface~\eqref{eq:ssvi} to be free of butterfly arbitrage.

\begin{theorem}\label{thm:Butterfly}
The SSVI volatility surface~\eqref{eq:ssvi} is free of butterfly arbitrage if the following conditions are satisfied
for all $\theta>0$:
\begin{enumerate}
\item $\theta\varphi(\theta)\left(1+|\rho|\right)< 4$;\label{item:RLasymptote}
\item $\theta\varphi(\theta)^2\left(1+|\rho|\right)\leq 4$.\label{item:phi2condition}
\end{enumerate}
\end{theorem}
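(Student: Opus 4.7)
The plan is to apply Lemma~\ref{lem:NoButterflyArb}, which reduces butterfly arbitrage-freeness to verifying (i) $g(k) \geq 0$ for all $k \in \mathbb{R}$, and (ii) $\lim_{k \to +\infty} d_+(k) = -\infty$. I handle (ii) first. A large-$k$ expansion of~\eqref{eq:ssvi} gives $w(k,\theta_t) = \tfrac{1}{2}\theta_t\varphi(\theta_t)(1+\rho)\,k + \mathcal{O}(1)$, so
$$
d_+(k) \;\sim\; \sqrt{k}\;\frac{\theta_t\varphi(\theta_t)(1+\rho) - 4}{2\sqrt{2\theta_t\varphi(\theta_t)(1+\rho)}} \qquad (k\to+\infty),
$$
which tends to $-\infty$ if and only if $\theta_t\varphi(\theta_t)(1+\rho) < 4$; this is weaker than condition~\ref{item:RLasymptote}, so that hypothesis alone takes care of (ii).

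To attack (i), I would change variables to $y := \varphi(\theta_t)\,k$ and $h(y) := \sqrt{y^2 + 2\rho y + 1}$. The SSVI slice then reads $w = \tfrac{\theta_t}{2}\bigl(1+\rho y + h\bigr)$, and routine differentiation yields
$$
w'(k) = \tfrac{\theta_t\varphi(\theta_t)}{2}\Bigl(\rho + \tfrac{y+\rho}{h}\Bigr), \qquad w''(k) = \tfrac{\theta_t\varphi(\theta_t)^2(1-\rho^2)}{2\,h^3}.
$$
Since $h \geq |y+\rho|$, we have $|\rho + (y+\rho)/h| \leq 1 + |\rho|$, and hence condition~\ref{item:RLasymptote} already yields the pointwise bound $|w'(k)| \leq 2$ on the whole real line, not merely the Roger Lee asymptotic bound.

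Substituting these expressions into~\eqref{eq:gBut} expresses $g$ as an explicit rational function of $(y,h)$ whose coefficients depend only on $\rho$, $T := \theta_t\varphi(\theta_t)$, and $X := \theta_t\varphi(\theta_t)^2$. The hard part of the proof is establishing $g \geq 0$ globally: the three terms of~\eqref{eq:gBut} have different signs and different behaviours in the wings, at the money, and near the point $y=-\rho$ where $h$ attains its minimum $\sqrt{1-\rho^2}$ and $w''$ is largest. My plan is to seek a grouping $g = A(y) + B(y)$ in which each hypothesis plays a transparent role: condition~\ref{item:RLasymptote} (through $|w'|\leq 2$) should control the piece $\bigl(1-\tfrac{kw'}{2w}\bigr)^{2} - \tfrac{(w')^{2}}{4w}$, while condition~\ref{item:phi2condition} should ensure that the positive curvature term $w''/2$ dominates the residual $(w')^{2}/16$ precisely in the region where $|w'|$ is near its maximum and $h$ near its minimum. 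The main obstacle is algebraic rather than conceptual: the interplay between $T$, $X$ and $h$ is nonlinear, so verifying the inequality pointwise in $y$ is likely to require either a careful analysis of the critical points of $g$, or a clever pointwise lower bound in which $4 - T(1+|\rho|)$ and $4 - X(1+|\rho|)$ appear as non-negative factors, making the sufficiency of the two hypotheses manifest.
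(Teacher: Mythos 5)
Your reduction to Lemma~\ref{lem:NoButterflyArb}, the change of variable $y=\varphi(\theta_t)k$, the formulas for $w'$ and $w''$, the pointwise bound $|w'|\le 2$ from condition~\ref{item:RLasymptote}, and the treatment of $\lim_{k\to+\infty}d_+(k)=-\infty$ are all correct and in line with the paper's argument. But the proposal stops short of the actual content of the theorem: establishing $g\ge 0$. You state a plan (``seek a grouping $g=A+B$\ldots'') but do not carry it out, and you yourself flag that the algebra is the obstacle. That obstacle \emph{is} the proof; without it you have only verified the easy boundary condition.

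Moreover, the grouping you propose --- condition~\ref{item:RLasymptote} controlling $\bigl(1-\tfrac{kw'}{2w}\bigr)^2-\tfrac{(w')^2}{4w}$ and condition~\ref{item:phi2condition} controlling $\tfrac{w''}{2}-\tfrac{(w')^2}{16}$ --- is not obviously sound: neither piece is individually sign-definite under the corresponding hypothesis alone, and the paper in fact needs \emph{both} hypotheses jointly to bound a single quantity. What the paper does instead (WLOG $0\le\rho<1$; $z=\varphi k$) is write $g(z)=f(z)/\bigl(64(z^2+2\rho z+1)^{3/2}\bigr)$ and organize the numerator as a polynomial in the two small parameters, $f(z)=a(z)-b(z)\,\theta\varphi^2-\tfrac{c(z)}{16}\,\theta^2\varphi^2$, shows $c(z)\ge 0$ for all $z$, and then --- crucially --- replaces $\theta\varphi$ and $\theta\varphi^2$ by their upper bound $4/(1+\rho)$, splitting on the sign of $b(z)$ to obtain minorants $f_1(z)$ (if $b\ge 0$) and $f_2(z)$ (if $b<0$). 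Each of $f_1,f_2$ then requires a nontrivial algebraic rewriting (separately for $z<0$ and $z>0$) to exhibit positivity. Until you produce an explicit argument of this kind --- or an equivalent one that genuinely closes the inequality $g\ge 0$ for all $y$ --- the proof is not complete.
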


\begin{proof}
For ease of notation, we suppress the explicit dependence of $\theta$ and $\varphi$ on~$t$.
By symmetry, it is enough to prove the theorem for $0\leq \rho<1$.
We shall therefore assume so, and we define $z:=\varphi k$.
The function $g$ defined in~\eqref{eq:gBut} reads
$$
g(z)=\frac{f(z)}{64 \left(z^2+2z\rho +1\right)^{3/2}},
$$
where
$$
f(z):=a-b\varphi^2\theta-\frac{c}{16}\varphi^2\theta^2,
$$
and where~$a$, $b$ and~$c$ depend on~$z$.
In the following, we frequently use the inequality
$$
z^2+2 z \rho +1=(z+\rho)^2+1-\rho^2 \ge 0.
$$
Computing the coefficient of $\varphi^2\theta^2$ in~$f(z)$ explicitly gives
\begin{align*}
c & = \sqrt{z^2+2 z \rho +1} \left\{\left(1+\rho ^2\right)\left(z+\rho\right)^2+2\rho (z+\rho)\sqrt{z^2+2 z \rho+1}+\left(1-\rho^2\right) \rho ^2\right\}\\
  & \geq \sqrt{z^2+2 z \rho +1} \left\{\left(1+\rho^2\right)\left(z+\rho\right)^2+2 \rho  (z+\rho )^2 +\left(1-\rho^2\right)\rho^2\right\}\\
  & = \sqrt{z^2+2 z \rho +1}\left\{\left(1+\rho\right)^2 \left(z+\rho\right)^2+\left(1-\rho^2\right) \rho ^2\right\}
  \geq 0.
\end{align*}
Thus if
$$
0\leq \theta\varphi \leq \frac 4 {1+\rho}
\quad \text{and} \quad
0\leq \theta\varphi^2\leq \frac 4 {1+\rho},
$$
we have
$$
f(z) \geq \left\{
\begin{array}{ll}
\displaystyle a - \frac{4\,b}{1+\rho}-\frac{c}{(1+\rho)^2}=:f_1(z), & \text{ if } b\geq 0, \\
\displaystyle  a -\frac{c}{(1+\rho)^2}=:f_2(z),  & \text{ if }    b<0.
\end{array}
\right.
$$

It is then straightforward to verify that
\begin{align*}
\frac{2f_1(z)}{(1+\rho)^2} &
 = \sqrt{z^2+2 z \rho +1} \left\{z^2 \rho -z (1-\rho ) \rho +2 (1+\rho) \left(1-\rho ^2\right)+\rho \right\}\\
 & +\rho \left(z+\rho\right)^2+3 \rho  \left(1-\rho ^2\right)+2 \left(1-\rho ^2\right)-z \rho\left(z^2+2z\rho+1\right),
\end{align*}
which is clearly positive for $z<0$.
To see that $f_1(z)$ is also positive when $z>0$, we rewrite it as
\beas
&&\frac{2\,f_1(z)}{(1+\rho)^2}\\
&=&\left\{\sqrt{z^2+2 z \rho +1}-(z+\rho) \right\} \left\{\rho  \left(z-\frac{1-\rho }{2}\right)^2+2 (1+\rho) \left(1-\rho^2\right)+\rho \left(1-\frac{(1-\rho )^2}{4}\right)\right\}\\
   && + (1+\rho) \left\{z \left(2-\rho^2\right)+2\left(1+\rho\right)\left(1-\rho^2\right)+\rho\right\}.
\eeas
Consider now the function $f_2(z)$.  It is straightforward to verify that
$$
f_2(z)=-\frac{2 z^3 \rho }{(1+\rho)^2}+\left(z^2+2 z \rho +1\right)^{3/2}
+2\left(z^2+2z\rho+1\right)+\sqrt{z^2+2z\rho+1}
$$
which is positive by inspection if $z<0$.  To see that $f_2(z)$ is also positive when $z>0$, we rewrite it as
\begin{align*}
f_2(z) & = z^3\,\frac{ 1+\rho ^2}{(1+\rho)^2}+3 z^2 \rho +2 \left(z^2+2 z \rho +1\right)\\
& +\left(z^2+2 z \rho +1\right)\left\{\sqrt{z^2+2 z \rho +1}-(z+\rho) \right\}\\
& +\sqrt{z^2+2 z \rho +1}+2 z \rho ^2+z+\rho.
\end{align*}
Thus $f(z) \geq 0$ in all cases.
From Lemma~\ref{lem:NoButterflyArb}, we are left to prove that $\lim_{k\to\infty}d_+(k)=~-~\infty$.
A straightforward computation shows that this is satisfied as soon as Condition 1 in Theorem~\ref{thm:Butterfly} holds.
\end{proof}

\begin{remark}
A SSVI volatility surface~\eqref{eq:ssvi} is free of butterfly arbitrage if
$$
\sqrt{v_t\,t}\max\left(p_t,c_t\right) < 2,
\quad\text{and}\quad
(p_t+c_t)\max\left(p_t,c_t\right)\leq 2,
$$
hold for all $t>0$.  The proof follows from Lemma \ref{lem:ssviToJW} by re-expressing Conditions \ref{item:RLasymptote} and \ref{item:phi2condition} of Theorem \ref{thm:Butterfly} in terms of SVI-JW parameters.
\end{remark}

The following lemma shows that Theorem \ref{thm:Butterfly} is almost if-and-only-if.

\begin{lemma}
The SSVI volatility surface~\eqref{eq:ssvi} is free of butterfly arbitrage only if
$$
\theta\varphi(\theta)\left(1+|\rho|\right) \leq 4,
\quad\text{for all }\theta>0.
$$
Moreover if $\theta\varphi(\theta)\left(1+|\rho|\right)=4$, the SSVI surface is free of butterfly arbitrage only if
$$
\theta\varphi(\theta)^2\left(1+|\rho|\right)\leq 4.
$$
Thus Condition~\ref{item:RLasymptote} of Theorem~\ref{thm:Butterfly} is necessary and Condition~\ref{item:phi2condition} is tight.
\end{lemma}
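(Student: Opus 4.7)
The approach is to exploit Lemma~\ref{lem:NoButterflyArb}, which makes $g(k)\geq 0$ on the whole real line a necessary condition for absence of butterfly arbitrage, and in particular forces $\liminf_{k\to+\infty} g(k)\geq 0$.  The SSVI formula~\eqref{eq:ssvi} is invariant under the involution $(k,\rho)\mapsto(-k,-\rho)$, so it is enough to establish the two inequalities under the assumption $\rho\in[0,1)$ and then replace $\rho$ by $|\rho|$; throughout only the right wing $k\to+\infty$ needs to be inspected.

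The first step is to Taylor-expand the square root in~\eqref{eq:ssvi}.  Writing $u:=\varphi(\theta)k$ and using $\sqrt{(u+\rho)^2+1-\rho^2}=u+\rho+(1-\rho^2)/(2u)+O(u^{-2})$ as $u\to+\infty$, one obtains
\[
w(k,\theta)=\beta k+c_0+\frac{c_1}{k}+O(k^{-2}),
\]
with $\beta:=\theta\varphi(\theta)(1+\rho)/2$, $c_0:=\theta(1+\rho)/2$ and $c_1:=\theta(1-\rho^2)/(4\varphi(\theta))$.  Differentiation produces matching expansions of $w'$ and $w''$.  Substituting into~\eqref{eq:gBut} and retaining only the $O(1)$ contribution gives $\lim_{k\to+\infty} g(k) = 1/4-\beta^2/16$; non-negativity of this limit forces $\beta\leq 2$, which after symmetrisation in $\rho$ is exactly $\theta\varphi(\theta)(1+|\rho|)\leq 4$.

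For the ``moreover'' statement, assume the saturation $\theta\varphi(\theta)(1+\rho)=4$, i.e.\ $\beta=2$.  The $O(1)$ coefficient of $g(k)$ then vanishes and the expansion must be pushed one order further.  A careful tracking of the $O(1/k)$ contributions of $\left(1-kw'/(2w)\right)^2$ and $(w')^2(1/w+1/4)/4$ (the term $w''/2$ being only of order $1/k^3$) produces
\[
g(k)=\frac{c_0-2}{4k}+O(k^{-2}),\qquad k\to+\infty.
\]
Non-negativity for large $k$ therefore forces $c_0\geq 2$, i.e.\ $\theta(1+\rho)\geq 4$; combined with $\theta\varphi(\theta)(1+\rho)=4$ this is equivalent to $\varphi(\theta)\leq 1$, and hence to $\theta\varphi(\theta)^2(1+\rho)\leq 4$, as claimed.

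The main obstacle is precisely this second step: the leading terms from the three summands in the definition of $g$ cancel exactly when $\beta=2$, so the $O(1/k)$ coefficients must be tracked without losing any cross contribution---a dropped term would flip the sign of the constant $c_0-2$ and the resulting necessary condition would fail to match Condition~\ref{item:phi2condition} of Theorem~\ref{thm:Butterfly}.  All remaining work---the initial expansion of $w$, the differentiations, and the $(k,\rho)\mapsto(-k,-\rho)$ symmetry that lifts the statements from $\rho\geq 0$ to arbitrary $\rho\in(-1,1)$---is routine asymptotic calculus.
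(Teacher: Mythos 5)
Your proof is correct and follows essentially the same route as the paper: expand $g(k)$ asymptotically as $|k|\to\infty$ (the paper states the two-term expansion for both wings outright; you derive the $O(1)$ term generally and the $O(1/k)$ term only in the saturated case $\beta=2$, using the $(k,\rho)\mapsto(-k,-\rho)$ symmetry to restrict to the right wing). The necessary conditions and the way they are read off agree with the paper's proof.
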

\begin{proof}
Considering the SSVI surface~\eqref{eq:ssvi} and the function $g$ defined in~\eqref{eq:gBut},
we have
$$
g(k) =
\left\{
\begin{array}{ll}
\displaystyle \frac{16-\theta^2\varphi(\theta)^2\left(1+\rho\right)^2}{64}+\frac{4-\theta\varphi(\theta)^2 \left(1+\rho\right)}{8\varphi(\theta) k}+\mathcal{O}\left(\frac{1}{k^2}\right),
\quad\text{as }k\to+\infty,\\
\displaystyle \frac{16-\theta^2\varphi(\theta)^2\left(1-\rho\right)^2}{64}-\frac{4-\theta\varphi(\theta)^2\left(1-\rho\right)}
{8\varphi(\theta)k}+\mathcal{O}\left(\frac{1}{k^2}\right),\quad\text{as }k\to-\infty.
\end{array}
\right.
$$
The result follows by inspection.
\end{proof}

\begin{remark}\label{rem:AsymptoticSlopes}
The asymptotic behavior of SSVI~\eqref{eq:ssvi} as $|k|$ tends to infinity is
$$
w(k,\theta_t) =\frac{\left(1\pm \rho\right)\theta_t}{2}\varphi(\theta_t)\left|k\right|+\mathcal{O}(1),
\quad\text{for any }t>0.
$$
We thus observe that the condition $\theta\varphi(\theta)\left(1+|\rho|\right)\leq 4$ of Theorem~\ref{thm:Butterfly}
 corresponds to the upper bound of $2$ on the asymptotic slope established by Lee~\cite{RogerLee} and so again, Condition~\ref{item:RLasymptote} of Theorem~\ref{thm:Butterfly} is necessary.
\end{remark}

The following corollary follows directly from Theorems~\ref{thm:arbitrarySkew} and~\ref{thm:Butterfly}.

\begin{cor}\label{cor:NoStaticArb}
The SSVI surface~\eqref{eq:ssvi} is free of static arbitrage if the following conditions are satisfied:
\begin{enumerate}
\item $\partial_t \theta_t \geq 0$, for all $t>0$
\item $0\leq \partial_\theta (\theta \varphi(\theta))  \leq 
\frac{1}{\rho^2}\left(1+\sqrt{1-\rho^2}\right)\varphi(\theta)$, for all $\theta>0$;
\item $\theta\varphi(\theta)\left(1+|\rho|\right)< 4$, for all $\theta>0$;
\label{item:4-2}
\item $\theta\varphi(\theta)^2 \left(1+|\rho|\right)\leq 4$, for all $\theta>0$.\label{item:5-2}
\end{enumerate}
\end{cor}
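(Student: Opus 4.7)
The plan is to simply invoke Definition~\ref{def:NoArbVol} and splice together the two theorems already established in this section. Recall that Definition~\ref{def:NoArbVol} says that a volatility surface is free of static arbitrage if and only if both (i) the surface is free of calendar spread arbitrage, and (ii) each time slice is free of butterfly arbitrage. So it suffices to verify each of these two clauses separately under the four hypotheses of the corollary.

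First I would observe that Conditions~1 and~2 of the corollary are exactly the two conditions appearing in Theorem~\ref{thm:arbitrarySkew}, which are necessary and sufficient for the SSVI surface~\eqref{eq:ssvi} to be free of calendar spread arbitrage. This immediately yields clause~(i) of Definition~\ref{def:NoArbVol}.

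Next I would observe that Conditions~\ref{item:4-2} and~\ref{item:5-2} of the corollary are precisely the two conditions of Theorem~\ref{thm:Butterfly}. Applied slice-by-slice (that is, with $\theta=\theta_t$ for each fixed $t>0$), these conditions guarantee that every slice $k\mapsto w(k,\theta_t)$ of the SSVI surface is free of butterfly arbitrage. This yields clause~(ii) of Definition~\ref{def:NoArbVol}.

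Combining the two clauses via Definition~\ref{def:NoArbVol} yields absence of static arbitrage, which is the claim. There is no real obstacle here: the only thing one needs to take care of is the slight asymmetry between Theorem~\ref{thm:arbitrarySkew} (iff) and Theorem~\ref{thm:Butterfly} (only sufficient), which is why the corollary can only be stated as a sufficient condition rather than an equivalence. No additional calculation is required beyond quoting the two theorems.
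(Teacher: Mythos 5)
Your proposal is correct and matches the paper's argument exactly: the paper itself states that the corollary ``follows directly from Theorems~\ref{thm:arbitrarySkew} and~\ref{thm:Butterfly},'' which is precisely the splicing-via-Definition~\ref{def:NoArbVol} that you describe. Your closing remark about the asymmetry (iff for calendar spread, only sufficient for butterfly) correctly explains why the corollary is one-directional.
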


\begin{remark}\label{rmk:arbFreePowerLaw}
Consider the function $\varphi(\theta)= \eta\theta^{-\gamma}$ with $\eta>0$ from Example~\ref{ex:PowerLaw}, then Condition~\ref{item:phi2condition} imposes $\gamma\in (0,1)$.
From Condition \ref{item:4-2}, such surfaces can be free of static arbitrage only up to some maximum expiry.
Take for instance the simple case $\theta_t:=\sigma^2 t$ for some $\sigma>0$.
Then the map $\psi:t\mapsto \theta_t\varphi(\theta_t)\left(1+|\rho|\right)- 4$ is clearly strictly increasing
with $\psi(0)=-4$ and $\lim_{t\to\infty}\psi(t)=\infty$.
Therefore there exists $t^*_0>0$ such that $\psi(t)\leq 0$ for $t\leq t^*_0$.
The map $\psi_2:t\mapsto \theta_t\varphi(\theta_t)^2\left(1+|\rho|\right)- 4$ is
\begin{itemize}
\item strictly increasing if $\gamma\in (0,1/2)$ with $\psi_2(0)=-4$ and $\lim\limits_{t\to\infty}\psi(t)=+\infty$;
there exists $t^*_1>0$ such that $\psi_2(t)\leq 0$ for $t\leq t^*_1$.
\item strictly decreasing if $\gamma\in (1/2, 1)$ with $\lim\limits_{t\to 0}\psi_2(0)=+\infty$ and $\lim\limits_{t\to\infty}\psi(t)=-4$;
there exists $t^*_1>0$ such that $\psi_2(t)\leq 0$ for $t\geq t^*_1$.
\item constant if $\alpha=1/2$ with $\psi_2\equiv-4$.
\end{itemize}
When $\gamma\in (0,1/2)$, the surface is guaranteed to be free of static arbitrage only for $t\leq t^*_0\wedge t^*_1$.
For $\gamma\in (1/2,1)$, this remains true only for $t\in (0,t^*_0)\cap(t^*_1,\infty)$ (which may be empty).
When $\gamma=1/2$, static arbitrage cannot occur for $t\leq t^*_0$.
However, the behavior for large~$\theta$ can be easily modified so as to ensure that the entire surface is free of static arbitrage.
For example, the choice
\begin{equation}\label{eq:arbFreePhi}
\varphi(\theta)= \frac{\eta}{\theta^\gamma\,(1+\theta)^{1-\gamma}}
\end{equation}
gives a surface that is completely free of static arbitrage provided that $\eta\left(1+|\rho|\right)\leq 2$.
\end{remark}

\begin{remark}
In the Heston-like parameterization of Example~\ref{ex:HestonExample}, note that
$$
\lim_{\theta\to+\infty}\theta\varphi(\theta)\left(1+|\rho|\right) = \frac{1+|\rho|}{\lambda}.
$$
Therefore Condition~\ref{item:4-2} of Corollary~\ref{cor:NoStaticArb} imposes
$\lambda\geq \left(1+|\rho|\right)/4$.
\end{remark}

The following model-independent theorem provides a way to expand the class of volatility surfaces that are guaranteed to be free of static arbitrage by adding a suitable time-dependent function.

\begin{theorem}\label{thm:AddingAlpha}
Let~$(k,t)\mapsto w(k,t)$ be a SSVI volatility surface~\eqref{eq:ssvi} satisfying the conditions of Corollary~\ref{cor:NoStaticArb}
(in particular free of static arbitrage), 
and $\alpha:\mathbb{R}_+\to\mathbb{R}_+$ a non-negative and increasing function of time.
Then the volatility surface $(k,t)\mapsto w_\alpha(k,\theta_t):=w(k,\theta_t )+\alpha_t$ is also free of static arbitrage.
\end{theorem}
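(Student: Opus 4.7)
The plan is to verify both clauses of Definition~\ref{def:NoArbVol} for the shifted surface $w_\alpha(k,\theta_t) = w(k,\theta_t) + \alpha_t$. Absence of calendar spread arbitrage is immediate: $\partial_t w_\alpha(k,t) = \partial_\theta w(k,\theta_t)\,\theta'_t + \alpha'_t$, with both summands non-negative --- the first because $w$ satisfies Corollary~\ref{cor:NoStaticArb} and is therefore calendar-spread-free by Theorem~\ref{thm:arbitrarySkew}, the second because $\alpha$ is non-decreasing.

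For butterfly arbitrage on each slice I fix $t>0$ and write $\alpha:=\alpha_t$. Lemma~\ref{lem:NoButterflyArb} then requires $g_\alpha(k)\geq 0$ for all $k$ together with $d_+^{(\alpha)}(k)\to-\infty$ as $k\to\infty$, where $g_\alpha$ denotes~\eqref{eq:gBut} with $w$ replaced by $w+\alpha$ (noting $(w+\alpha)'=w'$ and $(w+\alpha)''=w''$). The asymptotic condition is free: the additive shift does not affect the linear large-$|k|$ behaviour $w_\alpha(k)\sim\tfrac12(1+|\rho|)\theta\varphi|k|$ of Remark~\ref{rem:AsymptoticSlopes}, and Condition~\ref{item:4-2} of Corollary~\ref{cor:NoStaticArb} forces this slope to be strictly below $2$, which yields $d_+^{(\alpha)}\to-\infty$.

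The main obstacle is $g_\alpha\geq 0$. I would follow the template of the proof of Theorem~\ref{thm:Butterfly}: substitute the explicit SSVI expressions $w=\tfrac{\theta}{2}(1-\rho^2+\rho u+D)$, $w'=\tfrac{\theta\varphi}{2}(\rho+u/D)$ and $w''=\tfrac{\theta\varphi^2(1-\rho^2)}{2D^3}$ (with $u=\varphi k+\rho$, $D=\sqrt{u^2+1-\rho^2}$) into $g_\alpha$, clear the denominators $(w+\alpha)^2$ and $D^3$, and write $g_\alpha(k) = f_\alpha(z)/[\text{(manifestly positive)}]$ for an algebraic numerator $f_\alpha$ in $z=\varphi k$, $\rho$, $\theta\varphi$, $\theta\varphi^2$ and $\alpha$. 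By the $(\rho,k)\leftrightarrow(-\rho,-k)$ symmetry it is enough to treat $0\leq\rho<1$. The hard step is then showing $f_\alpha\geq 0$ by a case split on the signs of $z$ and of an $\alpha$-dependent auxiliary coefficient, producing explicit sum-of-non-negative-terms decompositions exactly as in the proof of Theorem~\ref{thm:Butterfly}, and using the bounds $\theta\varphi(1+\rho)<4$, $\theta\varphi^2(1+\rho)\leq 4$ from Corollary~\ref{cor:NoStaticArb} together with $\alpha\geq 0$. One cannot simply argue $g_\alpha\geq g$ (this can fail whenever $kw'<0$), nor short-cut via the martingale $\tilde S_t:=S_t\exp(B_{\alpha_t}-\alpha_t/2)$ built from the martingale $S$ whose call prices realise $w$ and an independent Brownian motion $B$: although $\tilde S$ is a martingale with arbitrage-free call prices, its implied-variance surface is obtained by convolving the smile of $S$ in $k$ and so differs from $w+\alpha_t$ unless $w$ is constant in $k$. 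Hence the direct algebraic verification above is the route I would take.
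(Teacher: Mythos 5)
Your calendar-spread argument and the treatment of the $d_+\to-\infty$ condition match the paper and are correct. The gap is in the butterfly half: what you offer is a plan, not a proof. You propose to repeat the substitute-and-decompose strategy of the proof of Theorem~\ref{thm:Butterfly} with $\alpha$ as an additional free parameter, and you acknowledge that the decisive step---exhibiting a sum-of-non-negative-terms decomposition of the resulting numerator $f_\alpha$---has not been carried out. This is not a harmless omission: once $\alpha$ enters, the numerator is a genuinely higher-dimensional object and there is no a priori reason that a decomposition as clean as the ones for $f_1,f_2$ in Theorem~\ref{thm:Butterfly} exists, nor any guarantee that a case split in $\alpha$ will close. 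In effect your plan would re-derive the butterfly-freedom of the base slice from scratch, when the theorem is really a comparison statement.

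The paper's proof treats $g\geq 0$ for $w$ as a black box and compares $g_\alpha$ to $g$. Set $G_\alpha:=g-g_\alpha$ and compute $\partial_\alpha G_\alpha(k)$ explicitly. Using convexity of $w$ (specifically $2w_\alpha(k)-kw'(k)>0$, since $w_\alpha(0)\geq 0$), one shows $\partial_\alpha G_\alpha(k)<0$ for $k<0$ and for $k>k^*$, where $k^*=-2\rho/\varphi(\theta_t)$ is the zero of $w'$ (taking $\rho<0$ by symmetry); since $G_0\equiv 0$, this yields $g_\alpha>g\geq 0$ on those regions with no new algebra at all. Your observation that $g_\alpha\geq g$ can fail where $kw'(k)<0$, i.e.\ on $(0,k^*)$, is precisely the obstruction, and is exactly where the paper switches arguments: there it proves $g_\alpha(k)\geq g_\alpha(0)$ directly, using the shape of $w''$ on $(0,k^*)$ (increasing then decreasing with $w''(0)=w''(k^*)$, hence $w''\geq w''(0)$), the inequality $w'(k)^2\leq w'(0)^2$, and the bound $\theta\varphi^2(1+|\rho|)\leq 4$ of Corollary~\ref{cor:NoStaticArb} via the auxiliary claim $k\,w(0)-\tfrac{w'(0)}{4}\bigl[w(k)-w(0)\bigr]\geq 0$. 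In short, you spotted the right obstacle but then abandoned it for a brute-force route; the paper instead builds the proof around that very obstacle, which is both shorter and actually closes.
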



\begin{proof}
Since $\partial_t w_\alpha(k,\theta_t):=\partial_t w(k,\theta_t )+\partial_t \alpha_t$, 
Lemma~\ref{lem:noCalendarArb} implies that~$w_\alpha$ is free of calendar spread arbitrage if $\p_t\alpha_t \geq 0$ and $\alpha_t\geq 0$.
We now show that $w_\alpha$ is also free of butterfly arbitrage.
For clarity, since butterfly arbitrage does not depend on the time parameter~$t$, we shall use the simplified notation
$w(k):=w(k,\theta_t)$, and likewise $w_\alpha(k):=w_\alpha(k, \theta_t)$.
Similarly, in view of~\eqref{eq:gBut}, we shall define the map $g_\alpha(k)$, 
where the function~$w$ is replaced by~$w_\alpha$.
We consider the case $\rho<0$ since the case $\rho>0$ follows by symmetry, and the result is obvious when $\rho=0$.
Let us consider the function $G_\alpha:\mathbb{R}\to \mathbb{R}$ defined by
$$G_\alpha(k):=g(k)-g_\alpha(k), \quad\text{for all }k\in\mathbb{R},$$
and let $k^*:=-2\rho/\varphi(\theta_t)>0$ be the unique solution to the equation $w'(k)=0$.
We can compute explicitly the following:
$$
G_\alpha(k)=
\frac{w'(k)}{4}\left(\frac{1}{w_\alpha(k)}-\frac{1}{w(k)}\right)
\left(4k+w'(k)-w'(k)k^2\left(\frac{1}{w_\alpha(k)}+\frac{1}{w(k)}\right)\right),
$$
which implies

\begin{equation}\label{eq:DiffGalpha}
\partial_\alpha G_\alpha(k)= -\frac{w'(k)}{4}\frac{\left(w'(k)+4k\right)w_\alpha(k)-2k^2w'(k)}{w_\alpha(k)^3}.
\end{equation}
Since $w'(0)=\rho\theta_t\varphi(\theta_t)<0$ the equation $w'(k)+4k=0$ has a unique solution $k_*>0$,
and $w'(k)+4k$ is strictly positive for any $k>k_*$ and strictly negative when $k<k_*$.
By strict convexity of the function $w$ it also follows that $k_*<k^*$.
Therefore for any $k\in\left(k_*,k^*\right)$, the two inequalities
$w'(k)<0$ and $w'(k)+4k>0$
hold, and therefore $\partial_\alpha G_\alpha (k)>0$.
Since by construction $G_0(k)=0$, we therefore conclude that $g(k)>g_\alpha(k)$ for any $k\in\left(k_*,k^*\right)$.
For $k\notin\left(k_*,k^*\right)$, the inequality $g(k)<g_\alpha(k)$ holds as soon as $\partial_\alpha G_\alpha (k)<0$.
Consider first the case $k>k^*$.
We can rewrite~\eqref{eq:DiffGalpha} as
$$
\partial_\alpha G_\alpha(k)= -\frac{w'(k)}{4}\frac{2k\left[2w_\alpha(k)-kw'(k)\right]+w_\alpha(k)w'(k)}{w_\alpha(k)^3}.
$$
so that it suffices to prove the inequality $2w_\alpha(k)-kw'(k)>0$ for any $k>k^*$.
It suffices to prove $\partial_\alpha G_\alpha (k)<0$ for then we have the inequality $g_\alpha(k)>g(k) \geq 0$ and there is no butterfly arbitrage.

First consider the case $k>k^*$, so that $w'(k)>0$.  
Recall that a continuously differentiable function $f$ is convex on the interval $(a,b)$ if and only if
$f(x)-f(y)\geq f'(x)(x-y)$ for all $(x,y)\in (a,b)$.
Setting $x=k$ and $y=0$, we conclude that $2w_\alpha(k)-kw'(k)>0$ since $w_\alpha(0)\geq 0$.  It follows that $\partial_\alpha G_\alpha(k)<0$ for any $k>k^*$.

For any $k<0$, we always have $w'(k)<0$, the inequality $2w_\alpha(k)-kw'(k)>0$ follows by convexity as above, and hence
$\partial_\alpha G_\alpha(k)<0$ for any $k<0$.
We prove here that $g_\alpha(k)\geq g_\alpha(0)$ for all such $k$.
Since we already showed that $g_\alpha(0)>0$, the result follows.
From the definition of $g_\alpha$ and \eqref{eq:gBut},
\bea
g_\alpha(k)-g_\alpha(0)&=&\left(1-\frac{k w'(k)}{2 \,(w(k)+\alpha)}\right)^2-1\nonumber\\
&&-\frac{w'(k)^2}{4}\left(\frac{1}{w(k)+\alpha}+\frac{1}{4} \right) + \frac{w'(0)^2}{4}\left(\frac{1}{w(0)+\alpha}+\frac{1}{4} \right)\nonumber\\
&&+\frac{w''(k)}{2}-\frac{w''(0)}{2}.
\label{eq:galpha}
\eea
A straightforward analysis shows that the function $k\mapsto w''(k)$ is strictly increasing on the interval~$\left(0,k^*/2\right)$ and strictly decreasing on~$\left(k^*/2,k^*\right)$.  The easy computation $w''(0)=w''(k^*)$ implies that $w''(k)\geq w''(0)$ on $\left(0,k^*\right)$. Also, $w'(0)^2 > w'(k)^2$ on $(0,k^*)$.  Simplifying \eqref{eq:galpha}, it follows that
\beas
g_\alpha(k)-g_\alpha(0)&\geq&\left(1-\frac{k w'(k)}{2(w(k)+\alpha)}\right)^2-1
+ \frac{1}{4}\,\left(\frac{w'(0)^2}{w(0)+\alpha}-\frac{w'(k)^2}{w(k)+\alpha} \right)\nonumber \\
&\geq&\frac{1}{4}\,\left(\frac{w'(0)^2}{w(0)+\alpha}-\frac{w'(k)^2}{w(k)+\alpha} \right) 
-\frac{k\, w'(k)}{w(k)+\alpha}.
\label{eq:dg2}
\eeas
Note that $w'(k)^2 \leq w'(0)\,w'(k) \leq w'(0)^2$ on the interval $(0, k^*)$ so
\bea
g_\alpha(k)-g_\alpha(0)
&\geq&\frac{w'(0)\,w'(k)}{4}\,\left(\frac{1}{w(0)+\alpha}-\frac{1}{w(k)+\alpha} \right) 
-\frac{k\, w'(k)}{w(k)+\alpha}.
\label{eq:dg3}
\eea
We now prove the following claim:
$k w(0) - \frac{w'(0)}{4}[w(k)-w(0)] \geq 0$ for $k\in (0,k^*)$.
Indeed, 
$$
k w(0) - \frac{ w'(0)}{4}[w(k)-w(0)]
 = \left(1-\frac{\rho^2\,\theta\,\varphi^2}{8}\right)\theta \,k + \frac{\rho\, \varphi\,\theta^2}{8}
-\frac{\rho\, \varphi \,\theta^2}{8}\,\sqrt{\varphi^2 k^2 + 2\varphi\, \rho\, k +1}.
$$
Condition~2 of Theorem~\ref{thm:Butterfly} implies that $1-\frac{\rho^2\theta\varphi^2}{8}\geq 0$.
Then (recall that $\rho\leq 0$) the right-hand side of the above equality represents an increasing function on $(0,k^*)$ which is equal to zero at the origin, and the claim holds.
Then, from \eqref{eq:dg3},
\beas
g_\alpha(k)-g_\alpha(0)
&\geq&\frac{-w'(k)}{(w(0)+\alpha)\,(w(k)+\alpha)}\,\left\{k\, (w(0)+\alpha) - \frac{w'(0)}{4}\,\left[ w(k)-w(0) \right]
\right\}\\
&\geq& 0.
\eeas
\end{proof}

\begin{remark}
Given a set of expirations $0< t_1<\ldots<t_n$ ($n\geq 1$) and at-the-money implied total variances $0< \theta_{t_1}<\ldots<\theta_{t_n}$, Corollary \ref{cor:NoStaticArb} gives us the freedom to match three features of one smile (level, skew, and curvature say) but only two features of all the other smiles (level and skew say), subject of course to the given smiles being themselves arbitrage-free.  Theorem \ref{thm:AddingAlpha} may allow us to match an additional feature of each smile through $\alpha_t$.
\end{remark}

\section{Numerics and calibration methodology}\label{sec:calibration}
\subsection{How to eliminate butterfly arbitrage}

In Section~\ref{sec:sviSurfaces}, we showed how to define a volatility smile that is free of butterfly arbitrage.
This smile is completely defined given three observables.
The ATM volatility and ATM skew are obvious choices for two of them.
The most obvious choice for the third observable in equity markets would be the asymptotic slope for $k$ negative and in FX markets and interest rate markets, perhaps the ATM curvature of the smile might be more appropriate.

In view of Lemma~\ref{lem:ssviToJW}, supposing we choose to fix the SVI-JW parameters~$v_t$, $\psi_t$ and~$p_t$ of a given SVI smile, we may guarantee a smile with no butterfly arbitrage by choosing the  remaining parameters~$c_t'$ and~$\widetilde{v}'_{t}$ as
$$
c'_t = p_t+2\,\psi_t,
\quad\text{and}\quad
\widetilde{v}'_t = v_t\, \frac{4p_t c'_t}{\left(p_t+c'_t\right)^2}.
$$
In other words, given a smile defined in terms of its SVI-JW parameters, we are guaranteed to be able to eliminate butterfly arbitrage by changing the call wing~$c_t$ and the minimum variance~$\widetilde v_t$, both parameters that are hard to calibrate with available quotes in equity options markets.

\begin{example}\label{ex:VogtExample}
Consider again the arbitrageable smile from Example~\ref{ex:AxelVogt}.
The corresponding SVI-JW parameters read
$$
\left(v_t, \psi_t, p_t, c_t, \widetilde{v}_t\right) =
\left(0.01742625, -0.1752111, 0.6997381, 1.316798, 0.0116249\right).
$$
We know then that choosing
$\left(c_t, \widetilde v_t \right)=\left(c_t^{o}, \widetilde v_t^{o} \right):=\left(0.3493158, 0.01548182\right)$
gives a smile free of butterfly arbitrage.  
It follows by continuity of the parameterization in all of its parameters, that there must exist some pair of parameters $(c_t^*,\widetilde v_t^*)$ with $c_t^* \in (c_t^o, c_t)$ 
and $\widetilde v_t^* \in (\widetilde v_t,v_t^o)$ such that the new smile is free of butterfly arbitrage and is as close as possible to the original one in some sense.  In this particular case, choosing the objective function as the sum of squared option price differences plus a large penalty for butterfly arbitrage, we arrive at the following ``optimal''  choices of the call wing and minimum variance parameters that still ensure no butterfly arbitrage:
$$\left(c_t^*, \widetilde v_t^*\right) = \left(0.8564763, 0.0116249\right).$$
Note that the optimizer has left~$\widetilde v_t$ unchanged but has decreased the call wing.
The resulting smiles and plots of the function $g$ are shown in Figure~\ref{fig:VogtVolFixed}.

\begin{figure}[htb!]
\begin{center}
\subfigure{\includegraphics[scale=0.27]{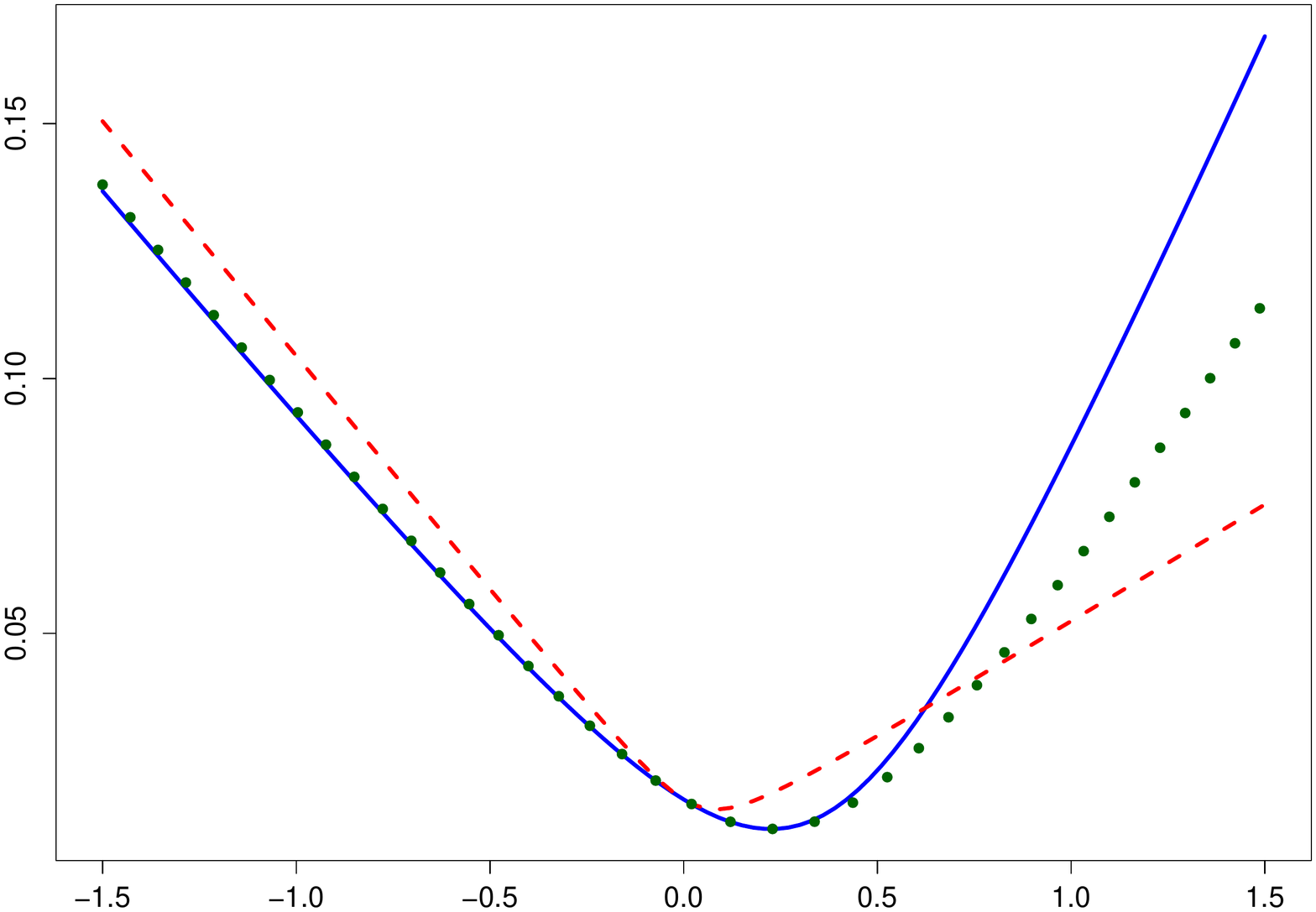}}
\subfigure{\includegraphics[scale=0.27]{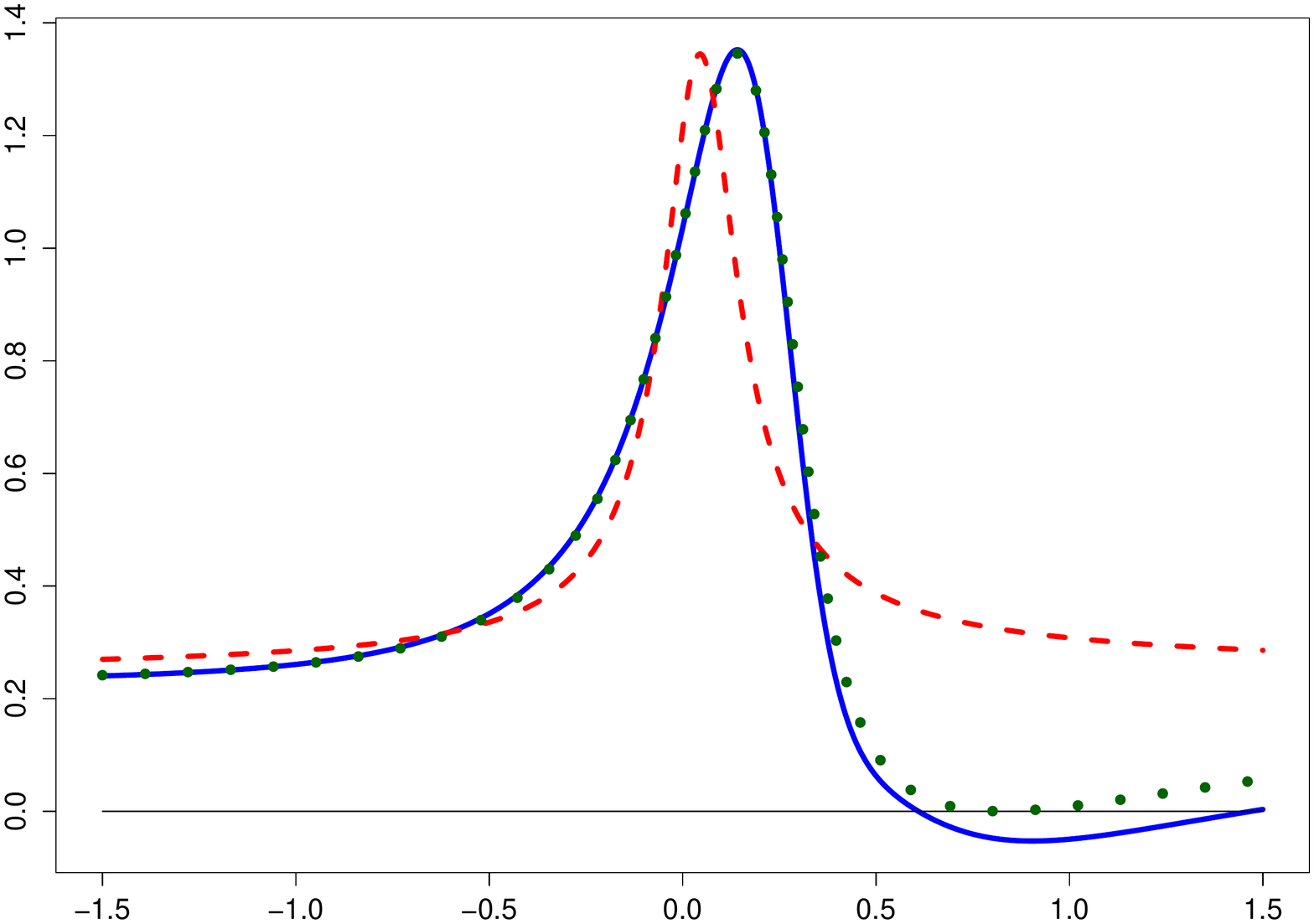}}
\caption{Plots of the total variance smile (left) and the function $g$ defined in~\eqref{eq:gBut} (right), using the parameters~\eqref{eq:VogtParams}.
The graphs corresponding to the original Vogt parameters is solid, to the guaranteed butterfly-arbitrage-free parameters dashed, and to the ``optimal'' choice of parameters dotted.}
\label{fig:VogtVolFixed}
\end{center}
\end{figure}
\end{example}

\begin{remark}
The additional flexibility potentially afforded to us through the parameter $\alpha_t$ of Theorem \ref{thm:AddingAlpha} sadly does not help us with the Vogt smile of Example \ref{ex:VogtExample}.  For $\alpha_t$ to help, we must have $\alpha_t>0$; it is straightforward to verify that this translates to the condition $v_t\,(1-\rho^2) < \tilde v_t$ which is violated in the Vogt case.
\end{remark}

\subsection{Calibration of SVI parameters to implied volatility data}\label{sec:fitting}

There are many possible ways of defining an objective function, the minimization of which would permit us to calibrate SVI to observed implied volatilities.  Whichever calibration strategy we choose, we need an efficient fitting algorithm and a good choice of initial guess.
The approach we will present here involves taking a square-root fit as the initial guess.  We then fit SVI slice-by-slice with a heavy penalty for calendar spread arbitrage (i.e. crossed lines on a total variance plot).
Consider two SVI slices with parameters $\chi_1$ and $\chi_2$ where $t_2>t_1$.  
We first compute the points $k_i$ $(i=1,\ldots,n)$ with $n\leq 4$ at which the slices cross, sorting them in increasing order.  If $n>0$, we define the points $\widetilde k_i$ as
\beas
\widetilde k_1&:=&k_1-1,\\
\widetilde k_i&:=&\frac 12\,(k_{i-1}+k_{i}),\quad \text{if } 2 \leq i \leq n, \\
\widetilde k_{n+1}&:=&k_n+1.
\eeas
For each of the $n+1$ points $\widetilde k_i$, we compute the amounts $c_i$ by which the slices cross:
\[
c_i=\max\left[0,w(\widetilde k_i;\chi_1)-w(\widetilde k_i;\chi_2)\right].
\]

\begin{definition}\label{def:crossedness}
The {\em crossedness} of two SVI slices is defined as the maximum of the $c_i$ $ (i=1,\ldots,n)$.  
If $n=0$, the crossedness is null.
\end{definition}

\begin{center}
\colorbox{shade}{\begin{minipage}[h!]{0.95\textwidth}

\begin{center}
{\bf An example SVI calibration recipe}
\end{center}

\begin{itemize}
\item Given mid implied volatilities $\sigma_{ij}=\sigma_{\BS}(k_i,t_j)$, compute mid option prices using the Black-Scholes formula.
\item Fit the square-root SVI surface by minimizing sum of squared distances between the fitted prices and the mid option prices.   This is now the initial guess.
\item Starting with the square-root SVI initial guess, change SVI parameters slice-by slice so as to minimize the  sum of squared distances between the fitted prices and the mid option prices with a big penalty for crossing either the previous slice or the next slice (as quantified by the crossedness from Definition~\ref{def:crossedness}).
\end{itemize}
\end{minipage}}
\end{center}

There are obviously many possible variations on this recipe.  The objective function may be changed and when finally working to optimize the fit slice-by-slice, one can work from the shortest expiration to the longest expiration or in the reverse order.  In practice, working forward or in reverse seems to make little difference. Changing the objective function on the other hand will make some difference especially for very short expirations.

\subsection{Interpolation and extrapolation of calibrated slices}

{Suppose we follow the above recipe above to fit SVI to options with a discrete set of expiries.  In particular, each of the resulting SVI smiles will be free of butterfly arbitrage.  It's not immediately obvious that we can interpolate these smiles in such a way as to ensure the absence of static arbitrage in the interpolated surface.  The following lemma shows that it is possible to achieve this.}

\begin{lemma}\label{lem:interpolation}
Given two volatility smiles~$w(k,t_1)$ and~$w(k,t_2)$ with $t_1 < t_2$ where the two smiles are free of butterfly arbitrage and such that
$w(k,\tau_2)\geq w(k,\tau_1)$ for all~$k$, there exists an interpolation such that the interpolated volatility surface is free of static arbitrage for $t_1<t<t_2$.
\end{lemma}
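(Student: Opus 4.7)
The plan is to interpolate in call-price space rather than in total variance directly. Writing the normalized call price function $c(k,t) := C_{\BS}(k, w(k,t))$, by hypothesis each endpoint slice $c(\cdot,t_i)$ is (after multiplication by the appropriate forward) the undiscounted call-price function of a bona fide probability measure on $\mathbb{R}_+$. Setting $\lambda(t) := (t_2-t)/(t_2-t_1) \in [0,1]$ for $t \in [t_1,t_2]$, I define the interpolated call price
$$
\widehat{c}(k,t) := \lambda(t)\, c(k,t_1) + (1-\lambda(t))\, c(k,t_2),
$$
and invert the strict monotonicity of $C_{\BS}(k,\cdot)$ on $(0,\infty)$ to define $\widehat{w}(k,t)$ implicitly by $\widehat{c}(k,t) = C_{\BS}(k, \widehat{w}(k,t))$. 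At $t = t_1$ and $t = t_2$, this recovers the given smiles.

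To verify absence of butterfly arbitrage at each intermediate $t$, I observe that the risk-neutral density associated with $\widehat{c}(\cdot,t)$ is, by bilinearity of $\partial_K^2$, the convex combination with weights $\lambda(t)$ and $1-\lambda(t)$ of the two endpoint densities, both of which are non-negative by hypothesis. The asymptotic requirement $\lim_{k\to\infty} d_+(k) = -\infty$ of Lemma~\ref{lem:NoButterflyArb} transfers from the endpoint slices to the interpolated one because $\widehat{c}(k,t)\to 0$ as $k\to\infty$ is inherited from the two endpoint call prices doing so.

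For absence of calendar spread arbitrage, the hypothesis $w(k,t_2) \geq w(k,t_1)$ combined with the monotonicity of $C_{\BS}(k,\cdot)$ gives $c(k,t_2) \geq c(k,t_1)$ for every $k$. Hence
$$
\partial_t \widehat{c}(k,t) = \frac{c(k,t_2)-c(k,t_1)}{t_2-t_1} \geq 0,
$$
and strict monotonicity of $C_{\BS}(k,\cdot)$ in its second argument transports this to $\partial_t \widehat{w}(k,t) \geq 0$, matching the criterion of Lemma~\ref{lem:noCalendarArb}.

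The main subtlety to address is regularity: the framework of Section~\ref{sec:butterflySpreadArbitrage} tacitly assumes $k \mapsto \widehat{w}(k,t) \in \mathcal{C}^2(\mathbb{R})$. This is inherited from the two endpoint smiles via the implicit function theorem applied to $\widehat{c} = C_{\BS}(\cdot, \widehat{w})$, using the strict positivity of the vega $\partial_w C_{\BS}$ on $(0,\infty)$. Continuity of $\widehat{w}$ in $t$ is likewise immediate, so the construction yields an interpolated surface free of static arbitrage on $[t_1, t_2]$, as required.
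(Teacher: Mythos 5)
Your approach is essentially the same as the paper's: interpolate the normalized call price at fixed log-moneyness $k$ by a time-dependent weight, then verify calendar monotonicity and convexity directly. Two comments. First, you use the weight $\lambda(t)=(t_2-t)/(t_2-t_1)$, linear in calendar time; the paper instead takes $\alpha_t=(\sqrt{\theta_{t_2}}-\sqrt{\theta_t})/(\sqrt{\theta_{t_2}}-\sqrt{\theta_{t_1}})$ for some monotonic ATM-total-variance interpolation $\theta_t$. Either choice decreases from $1$ to $0$ and preserves both arbitrage conditions; the paper's choice has the practical benefit that the ATM variance of the interpolated slice nearly matches the prescribed $\theta_t$, which your linear weight does not control. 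Second, the phrase ``by bilinearity of $\partial_K^2$, the density is the convex combination with weights $\lambda(t)$ and $1-\lambda(t)$'' is imprecise as written: you interpolate at fixed log-moneyness $k$, not fixed strike $K$, and $K_t=F_t e^{k}$ differs across expirations, so $\partial_K^2$ does not commute naively with your affine combination. The paper bridges this by noting that $C/F$ depends on $F$ and $K$ only through $k$, and then using the identity $K^2\partial_K^2 f=\partial_k^2 f-\partial_k f$, which is linear in $f$ and yields
$\frac{K_t^2}{F_t}\partial_{K_t}^2 C_t=\lambda(t)\frac{K_1^2}{F_1}\partial_{K_1}^2 C_1+(1-\lambda(t))\frac{K_2^2}{F_2}\partial_{K_2}^2 C_2\geq 0$.
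So the interpolated density is a non-negative combination of the endpoint densities (with rescaled weights and evaluated at the corresponding strikes), not literally the $\lambda$-weighted convex combination you assert. Your conclusion stands, but this step needs the change-of-variables argument to be rigorous.
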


\begin{proof}
Given the two smiles $w(k,t_1)$ and $w(k,t_2)$, we may compute the (undiscounted) prices $C(F_i,K_i,t_i)=:C_i$ of European calls with expirations $t_i$ ($i=1,2$) using the Black-Scholes formula.
In particular, since the two smiles are free of butterfly arbitrage,
\[
\frac{\p ^2 C_i}{\p K^2} \geq 0,\quad \text{ for }i=1,2.
\]
Consider any monotonic interpolation $\theta_t$ of the at-the-money implied total variance $w(0,t)$.
Let $K_i=F_i\Ex^k$ and $K_t=F_t\Ex^k$.
Then for any $t_1<t<t_2$, define the price $C_t=C(F_t,K_t,t)$ of a European call option to be
\beq
\frac{C_t}{K_t}=\alpha_t\frac{C_1}{K_1}+\left(1-\alpha_t\right)\frac{C_2}{K_2},
\label{eq:interpolatedCall}
\eeq
where for any $t\in \left(t_1,t_2\right)$, we define
\beq
\alpha_t:=\frac{\sqrt{\theta_{t_2}}-\sqrt{\theta_{t}}}{\sqrt{\theta_{t_2}}-\sqrt{\theta_{t_1}}}\in\left[0,1\right].
\label{eq:alphat}
\eeq
By construction, for fixed $k$, the inequality
\[
\frac{\p}{\p\tau} \frac{C_t}{K_t} \geq 0
\]
holds so that there is no calendar spread arbitrage.  Also, because of the square-roots in the definition \eqref{eq:alphat}, the at-the-money interpolated option price will be almost perfectly consistent with the chosen implied total variance interpolation $\theta_{t}$.
Moreover, if the two smiles $w(k,t_1)$ and $w(k,t_2)$ are free of butterfly arbitrage, we have $\p_{K,K} C(k,t) \geq 0$.
To see this, first note that because all the options have the same moneyness, the identity~\eqref{eq:interpolatedCall} is equivalent to
\beq
\frac{C_t}{F_t}=\alpha_t\frac{C_1}{F_1}+\left(1-\alpha_t\right)\frac{C_2}{F_2}.
\label{eq:interpolated2}
\eeq
Then note that the ratio $C(F,K,t)/F$ is a function of $F$ and $K$ only through the log-moneyness~$k$.
Also, for $K=K_t, K_1, K_2$, we have
\[
K^2\frac{\p^2 f}{\p K^2}=\frac{\p^2 f}{\p k^2}-\frac{\p f }{\p k}.
\]
Applying this to~\eqref{eq:interpolated2}, we obtain
$$
\frac{K_\tau^2}{F_t}\frac{\partial^2 C_t}{\partial K_t^2}
=\alpha_t\frac{K_1^2}{F_1}\frac{\partial^2 C_1}{\partial K_1^2}
+\left(1-\alpha_t\right)\frac{K_2^2}{F_2}\frac{\partial^2 C_2}{\partial K^2}.
$$
All the terms on the rhs are non-negative, so the lhs must also be non-negative. We conclude that there is no butterfly arbitrage in the interpolated slice and thus that there is no static arbitrage.
The interpolated volatility surface may be retrieved by inversion of the Black-Scholes formula.
\end{proof}

We could conceive of a myriad of algorithms for extrapolating the volatility surface.
For example, one way to extrapolate a given set of $n\geq 1$ (arbitrage-free) volatility smiles
with expirations $0<t_1<\ldots<t_n$ would be as follows:
At time $t_0=0$, the value of a call option is just the intrinsic value.  
We may then interpolate between $t_0$ and $t_1$ using the algorithm presented in the proof of Lemma~\ref{lem:interpolation}, thereby guaranteeing no static arbitrage.  
For extrapolation beyond the final slice, we suggest to first recalibrate the final slice 
using the SSVI form~\eqref{eq:ssvi}. 
Then fix a monotonic increasing extrapolation of~$\theta_t$
(asymptotically linear in time would seem to be reasonable) and extrapolate the smile for $t>t_n$ according to
\[
w(k,\theta_t)=w(k,\theta_{t_n})+\theta_t-\theta_{t_n},
\]
which is free of static arbitrage if $w(k,\theta_{t_n})$ is free of butterfly arbitrage by Theorem~\ref{thm:AddingAlpha}.

\subsection{A calibration example}

We take SPX option quotes as of 3pm on 15-Sep-2011 (the day before triple-witching) and compute implied volatilities for all 14 expirations. The result of fitting square-root SVI is shown in Figure \ref{fig:sqrtSVI}.  The result of fitting SVI following the recipe provided in Section \ref{sec:fitting} is shown in Figure \ref{fig:SVIQR}.  With the sole exception of the first expiration (options expiring at the market open on the following morning), the fit quality is almost perfect.

\begin{figure}[htb!]
\begin{center}
\includegraphics[width=\linewidth]{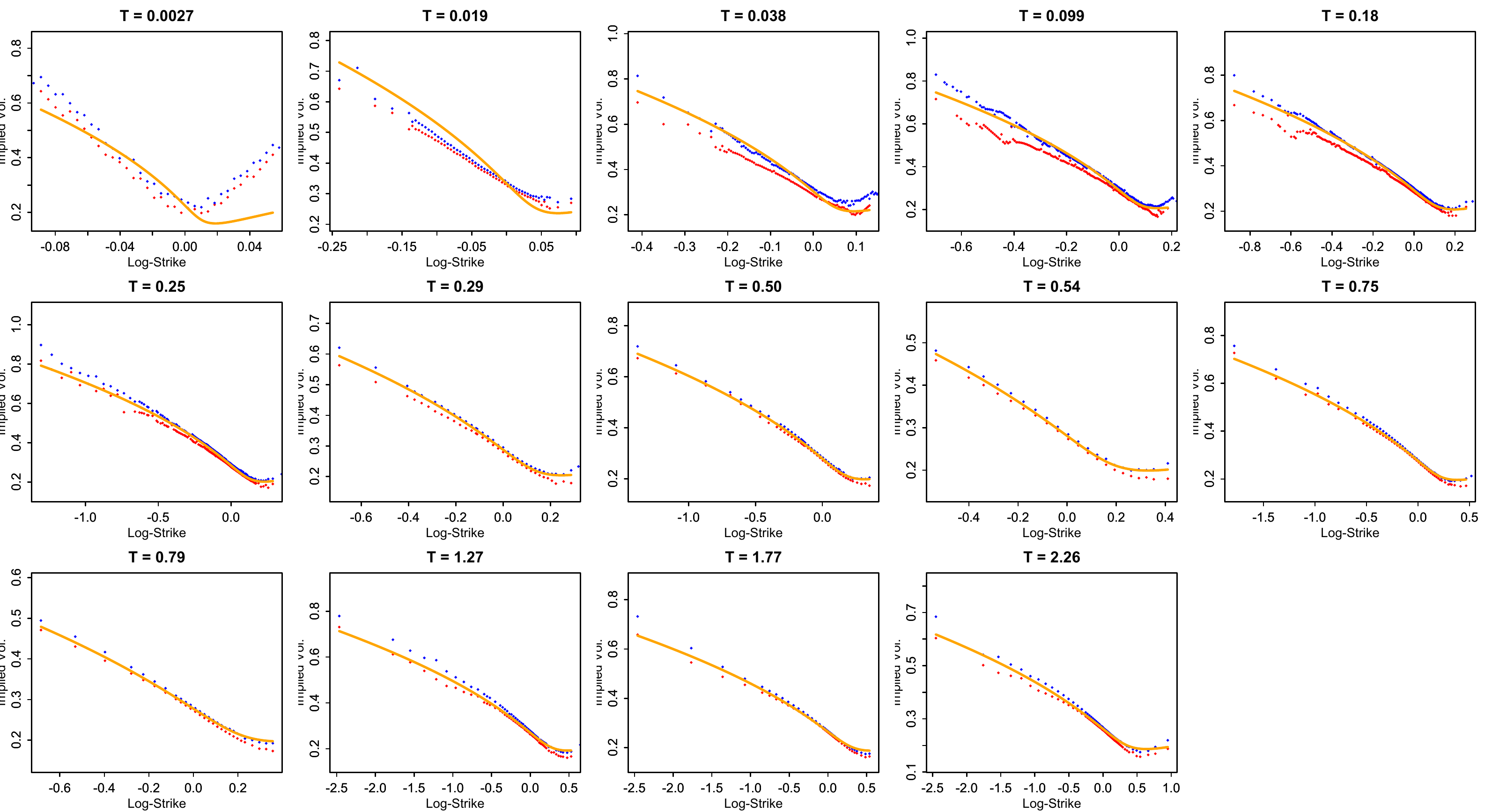}
\caption{Red dots are bid implied volatilities; blue dots are offered implied volatilities; the orange solid line is the square-root SVI fit}
\label{fig:sqrtSVI}
\end{center}
\end{figure}

\begin{figure}[htb!]
\begin{center}
\includegraphics[width=\linewidth]{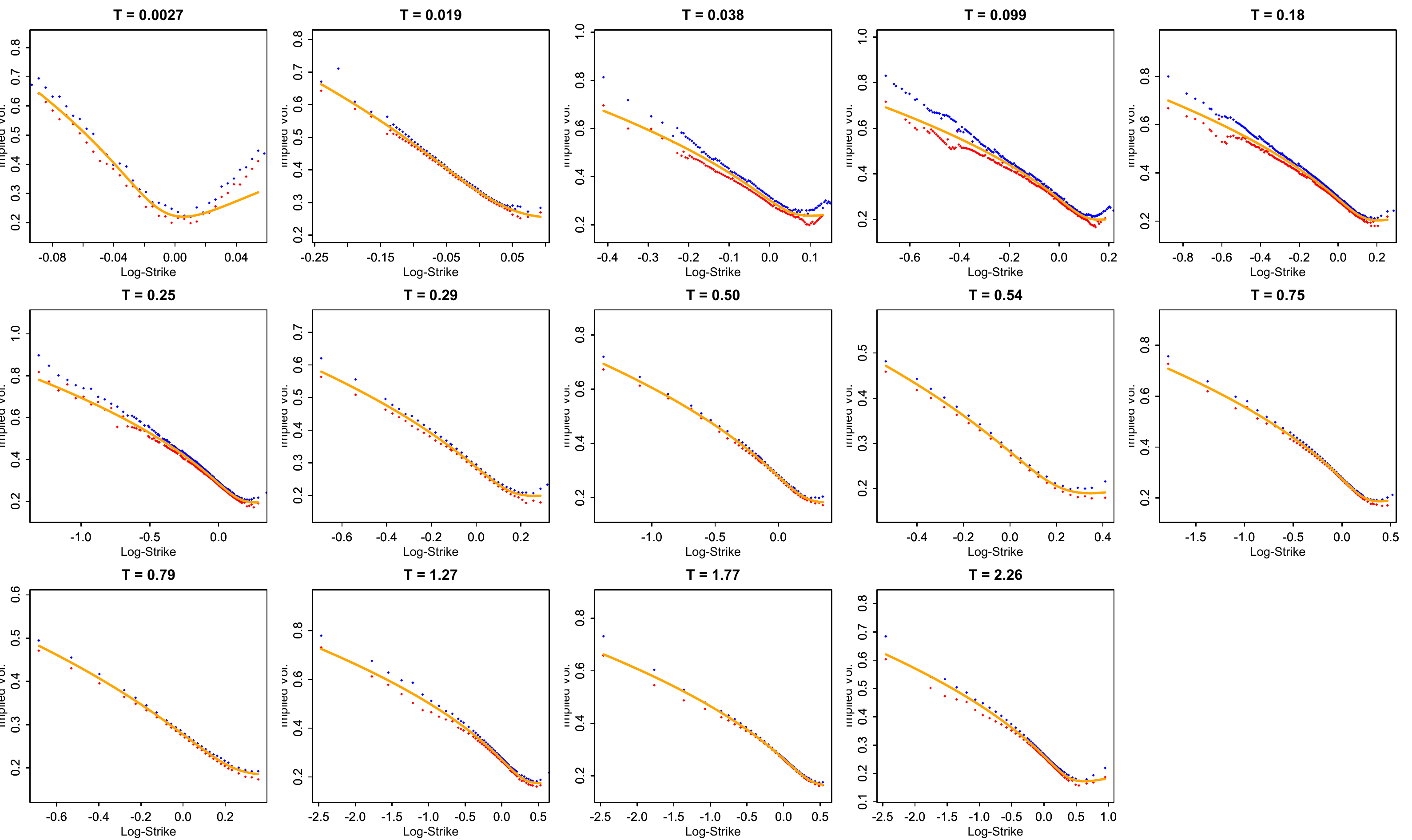}
\caption{Red dots are bid implied volatilities; blue dots are offered implied volatilities; the orange solid line is the SVI fit following recipe of Section \ref{sec:fitting}}
\label{fig:SVIQR}
\end{center}
\end{figure}

\newpage

\section{Summary and conclusion}\label{sec:conclusion}

We have found and described a large class of arbitrage-free SVI volatility surfaces with a simple closed-form representation.  Taking advantage of the existence of such surfaces, we showed how to eliminate both calendar spread and butterfly arbitrages when calibrating SVI to implied volatility data.  We have also demonstrated the high quality of typical SVI fits with a numerical example using recent SPX options data.
The potential applications of this work to modelling the dynamics of the implied volatility surface are left for future research.

\section*{Acknowledgments}

The first author is very grateful to his former colleagues at Bank of America Merrill Lynch for their work on SVI and its implementation, in particular Chrif Youssfi and Peter Friz.  
We also thank Richard Holowczak of the Subotnick Financial Services Center at Baruch College for supplying the SPX options data,
Andrew Chang of the Baruch MFE program for helping with the data analysis,
Julien Guyon and the participants of Global Derivatives, Barcelona 2012 for their feedback and comments.
We are very grateful to the anonymous referees for their helpful comments and suggestions, and in particular to one of the referees who led us to tighten our results and correct an error in one proof.

%
%
%
%

\end{document}